\newtheorem{prop}{Proposition}
\begin{document}


\title{Readout error mitigated quantum state tomography tested on superconducting qubits}

\date{\today}

\author{Adrian Skasberg Aasen}
\thanks{These authors contributed equally to this work.}
\affiliation{Kirchhoff-Institut f\"{u}r Physik, Universit\"{a}t Heidelberg, Im Neuenheimer Feld 227, 69120 Heidelberg, Germany}
\affiliation{Institut für Festk\"{o}rpertheorie und -optik,  Friedrich-Schiller-Universit\"{a}t Jena, Max-Wien-Platz 1, 07743 Jena, Germany}

\author{Andras Di Giovanni}
\thanks{These authors contributed equally to this work.}
\affiliation{Physikalisches Institut, Karlsruher Institut für Technologie, Kaiserstraße 12, 76131 Karlsruhe, Germany}

\author{Hannes Rotzinger}
\affiliation{Physikalisches Institut, Karlsruher Institut für Technologie, Kaiserstraße 12, 76131 Karlsruhe, Germany}
\affiliation{Institut für QuantenMaterialien und Technologien, Karlsruher Institut für Technologie, Kaiserstraße 12, 76131 Karlsruhe, Germany}

\author{Alexey V. Ustinov}
\affiliation{Physikalisches Institut, Karlsruher Institut für Technologie, Kaiserstraße 12, 76131 Karlsruhe, Germany}
\affiliation{Institut für QuantenMaterialien und Technologien, Karlsruher Institut für Technologie, Kaiserstraße 12, 76131 Karlsruhe, Germany}

\author{Martin G\"{a}rttner}
\email{martin.gaerttner@uni-jena.de}
\affiliation{Institut für Festk\"{o}rpertheorie und -optik,  Friedrich-Schiller-Universit\"{a}t Jena, Max-Wien-Platz 1, 07743 Jena, Germany}

\begin{abstract}
    Quantum technologies rely heavily on accurate control and reliable readout of quantum systems. Current experiments are limited by numerous sources of noise that can only be partially captured by simple analytical models and additional characterization of the noise sources is required. We test the ability of readout error mitigation to correct realistic noise found in systems composed of quantum two-level objects (qubits). To probe the limit of such methods, we designed a beyond-classical readout error mitigation protocol based on quantum state tomography (QST), which estimates the density matrix of a quantum system, and quantum detector tomography (QDT), which characterizes the measurement procedure. By treating readout error mitigation in the context of state tomography the method becomes largely readout mode-, architecture-, noise source-, and quantum state-independent. We implement this method on a superconducting qubit and evaluate the increase in reconstruction fidelity for QST. We characterize the performance of the method by varying important noise sources, such as suboptimal readout signal amplification, insufficient resonator photon population, off-resonant qubit drive, and effectively shortened $T_1$ and $T_2$ decay times. As a result, we identified noise sources for which readout error mitigation worked well, and observed decreases in readout infidelity by a factor of up to 30.   
\end{abstract}

\maketitle

\section{Introduction}
\label{sec:intro}

Building quantum machines capable of harnessing superposition and entanglement promises advances in many fields ranging from cryptography \cite{Bennett2014, Shor1997}, material simulation \cite{Ma2020} and drug discovery \cite{Pyrkov2023, Zinner2022} to finance \cite{Ors2019, Egger2021, Dri2022} and route optimizations \cite{Harwood2021}. While these applications assume a large-scale fault-tolerant quantum computer, currently we are still in the era of noisy intermediate scale quantum (NISQ) devices \cite{Preskill2018}, where noise \footnote{In this work we refer to noise as the physical phenomenon which causes variable operation of the experiment, and error is the deviation of the measured quantities with respect to their theoretical expectation values.} and imperfection of qubits fundamentally limit the applicability of quantum algorithms. To harness the power of quantum computation and quantum simulation today, we need methods that are suitable for working with noisy hardware.

There are predominantly two approaches to combating errors in qubit systems. On the one hand, one can study the exact sources of noise and errors, in order to find new materials or techniques to eliminate them. For superconducting qubits, for instance, parasitic two-level systems coupling to qubits or resonators have been shown to be a limiting factor for qubit coherence \cite{Lisenfeld2023}, leaving room for further hardware improvements.

On the other hand, instead of reducing the noise inherent in the hardware, one can apply algorithmic methods to handle and reduce the errors. There are primarily two approaches to algorithmic error correction \cite{Zheyu2023}, of which the first and most prominent is quantum error \textit{correction}. The core idea here is to encode logical qubits in multiple noisy physical qubits. Adaptive corrective operations are then performed based on syndrome measurements \cite{Nielsen2012}. These methods require low enough gate errors such that the qubit number overhead does not blow up. Current gate fidelities are too low for general applicability, however, promising results have been achieved by using surface codes \cite{Krinner2022, Kitaev2003, Dennis2002, Raussendorf2007}. 

One prominent source of errors not captured in error correction are readout errors, a subset of state preparation and measurement (SPAM)-errors \cite{Greenbaum2015,Geller2021}. Readout errors occur in the process of reading out the state of the qubit, e.g. spin projection measurements. Experimentally, one can understand the importance of such errors through the example of a superconducting resonator measurement of a transmon. When using only few photons to read out the resonator, there is a nonzero probability of an incorrect readout. However, by increasing the number of photons, higher qubit levels will be excited leading to a sharp drop in readout fidelity~\cite{Walter2017}.

Readout errors are better handled by the second approach, quantum error \textit{mitigation} (QEM). These kinds of methods do not correct the errors in execution of the quantum algorithm, but rather reduce the errors by post-processing the measured data. There exists a broad spectrum of QEM methods \cite{Zheyu2023,Qin2022, Endo2021}, a large class of which focuses on readout errors, such as unfolding \cite{Nachman2020,Srinivasan2022}, T-matrix inversion \cite{Geller2020}, noise model fitting \cite{Bravyi2021,Kwon2021} and detector tomography based methods \cite{Maciejewski2020}. Other promising results have recently been shown on a NISQ processor \cite{Kim2023} using zero-noise extrapolation \cite{Temme2017,Li2017}.

Our proposed method falls into the category of detector tomography-based mitigation, similar to Ref.~\cite{Maciejewski2020}, where the focus is on correction of readout probability vectors. 
Detector tomography-based methods use a set of calibration states, e.g. the eigenstates of the Pauli matrices, to characterize the measurement outcomes realized by the measurement device using the positive operator-valued measure (POVM) formalism. By looking at the reconstructed measurements, one gains information about the noise present in the experiment \cite{Chen2019}, which can then be leveraged for error mitigation. We emphasize that readout error mitigation operates on the statistics of output distributions collected from multiple runs of the experiment, and is not able to correct the results of an individual projective measurement.  

Our approach generalizes the ideas of previous protocols by using a complete characterization of a quantum system. This is done by quantum state tomography, which directly estimates the density matrix of the system \cite{Smithey1993, Liu2005}. The question of QST in error mitigation was also considered in Ref.~\cite{Maciejewski2020}, but restricted to only classical errors, i.e. errors that can be described as a stochastic redistribution of the outcome statistics of single basis measurements. Ref. \cite{Chen2019} does perform a general reconstruction of the detectors, and suggests it can be used in QST, but does not provide any protocol that goes beyond correction of marginals with numerical inversion, mostly focusing on classical errors in their applications. Similar ideas were compared to data pattern tomography in Refs.~\cite{Motka2017, Motka2021} with linear inversion. In Ref.~\cite{Ramadhani2021}, some of the error mitigation methods mentioned above were compared with simulated depolarizing noise in the context of state reconstruction. The key element that constitutes the novelty of our approach is the direct integration of QDT with QST, which leads to an error mitigation scheme that does not require matrix inversions or numerical optimizations. On top of this the protocol makes no assumptions about the error channel that represents the readout noise or about the type of quantum state (e.g.\ entangled, separable, pure, mixed) that it applies to. As we do no restrict the set of allowed POVMs in the detector tomography in any way, the protocol is also agnostic to the architecture (e.g.\ superconducting qubits, photons, ultracold-atoms, neutral atoms) and the type of readout mode (e.g.\ single projective measurement (photons), resonator readout (superconducting qubits)). 

With this general scheme, we test the limits of readout error mitigation by subjecting the protocol to common noise sources. Previous works have been extensively tested on publicly accessible IBM quantum computers, which only allow for very limited experimental control over noise sources. In this paper, we evaluate our proposed method on a chip, where we have full control over the experiment. This allows us to systematically induce and vary important noise sources, which leads to a better understanding of the strengths and limitations of the protocol.

The paper is structured as follows: in section \ref{sec:preliminaries}, we review theoretical concepts used in the protocol, such as POVMs, readout noise parametrization, quantum state tomography and a measure of reconstruction success. 
In section \ref{sec:noise_mitigated_QST} we briefly review QDT and present our readout error mitigation protocol.
Section ~\ref{sec:experimental_realization} presents the experimental setup and the implementation of the measurement protocol, followed by a discussion of the measured data resulting from the systematic variation of experimental noise sources.  
In section \ref{sec:protocol_errors}, we investigate the limitations of the protocol, with a particular focus on sample fluctuations and state preparation errors. 
A summary of the advances that the protocol offers and a review of possible future research directions is presented in section \ref{sec:conclusion_and_outlook}.

\section{Preliminaries}
\label{sec:preliminaries}

This section reviews the relevant theoretical concepts to describe general quantum measurements and state reconstruction \cite{Nielsen2012,Sakurai2017, Paris2004}. Readers familiar with these concepts may want to skip this section.

Notation used in this paper unless stated otherwise: 
Objects with tilde, e.g. $\Tilde{p}$, indicate that it has been subject to readout noise.
Objects with a hat, e.g. $\hat{p}$, are experimentally measured quantities, which contain sample fluctuations and noise. The hat is also used to denote estimators, which will be clear from the context.
Operators and objects without either accent will be considered theoretically ideal objects. 
The two eigenstates of the Pauli operators $\sigma_x, \sigma_y$ and $\sigma_z$, will be denoted by $\ket{0_i}$ and $\ket{1_i}$, where $i \in \{x, y, z\}$. If no subscript is given, the $z$-eigenstates are implied.
\subsection{Generalized measurements}
Measurements on quantum systems are described by expectation values of operators, given by Born’s rule \cite{Nielsen2012, Sakurai2017},
\begin{equation}
    \langle O \rangle=\Tr(\rho O),
\end{equation}
where $\rho$ defines the quantum system and $O$ represents a measurable quantity.
Experiments typically perform projective measurements onto a set of basis states. The simplest example is a measurement in the computational basis, with two possible outcomes. The projective operators take on the form $P_0= |0 \rangle \langle 0|$ and $P_1= |1 \rangle \langle 1|$, where $p_i=\Tr(\rho P_i )$  gives the probability of obtaining the outcome $i$.

In this work we are interested in noisy measurements. To be able to capture realistic readout we need to move to generalized quantum measurements, described by positive operator-valued measures (POVMs).  
A POVM is a particular set of operators $\{M_i\}$, sometimes called effects, that have the interpretation of yielding a probability distribution over measurement outcomes through their expectation values. In particular, any POVM has the three following properties \cite{Nielsen2012}:
\begin{equation}
    M_i^\dag = M_i, \quad  M_i\geq 0, \quad\text{and} \quad  \sum_i M_i = \mathbb{1},
\end{equation}
where the first property guarantees that the resulting expectation values are real, the second guarantees positive expectation values, and the third guarantees that the expectation values sum to unity. All together, we have a natural correspondence between each operator $M_i$ and the probability of getting outcome $i$ of a random process,
\begin{equation}
    p_i=\langle M_i \rangle=\Tr(\rho M_i).
\end{equation}
Through this general interpretation of the effects $M_i$, one can assign each effect to a possible outcome from a measurement device.

We are particularly interested in POVMs that form a complete basis in their respective Hilbert space. Such a POVM is called informationally complete (IC). Using an IC POVM allows one to decompose any operator $O$ in terms of the set $\{M_{\text{IC},i}\}$, 
\begin{equation}
        O=\sum_i c_{i} M_{\text{IC},i}.
        \label{eq:IC_POVM_decompostion}
\end{equation}
If the POVM is minimal, meaning that its elements are linearly independent, the coefficients $c_{i}$ are unique. For qubit systems to be informationally complete, we need $4^n$ linearly independent POVM effects, where $n$ is the number of qubits in the system. An example of an IC POVM is the so-called Pauli-6 POVM. Experimentally such a measurement can be performed by randomly selecting either the $x-$, $y-$ or $z-$basis and then perform a projective measurement in that basis. This POVM is defined by the set $\left\{\tfrac{1}{3}\ket{0_i}\bra{0_i}, \tfrac{1}{3}\ket{1_i}\bra{1_i}\right\}$, where $i\in \{x, y, z\}$. Note that the Pauli-6 POVM is not minimal, and its elements are not projectors, due to the prefactor acquired by the random basis selection.

\subsection{Representation of readout noise}
Measurements of quantum objects suffer from imperfections caused by the errors introduced by the measurement device. This means that when intending to perform the POVM $\{M_i\}$, we actually perform the erroneous POVM $\{\Tilde{M}_i\}$. In reading out the quantum state $\rho$ we observe the distribution dictated by the Born's rule
\begin{equation}
    \Tilde{p_i}=\Tr(\rho \Tilde{M_i}),
    \label{eq:passive_noise}
\end{equation}
which we will dub a \textit{passive} picture of noise. This runs counter to the more common view, where the noise is applied to the quantum state rather then the readout operator, hereafter referred to as the \textit{active} picture of noise, 
\begin{equation}
    \Tilde{p_i}=\Tr(\Tilde{\rho}M_i),
    \label{eq:active_noise}
\end{equation}
where $\Tilde{\rho} = \mathcal{E}(\rho)$ is the ideal quantum state passed passed through a noise channel \cite{Nielsen2012}.
These two views are equivalent, since we can only access the probabilities $\Tilde{p}_i$. 
In the remainder of this work we will be working in the passive noise picture.

\subsection{Quantum state tomography}
The objective of quantum state tomography is to reconstruct an arbitrary quantum state from only the measurement results. The most basic approach to QST can be framed as a linear inversion problem. To uniquely identify the quantum state, we use the decomposition provided by an IC POVM in eq.~\eqref{eq:IC_POVM_decompostion}, 

\begin{equation}
    \rho = \sum_i a_i M_{i},
\end{equation}
where we have dropped the subscript ``IC". 
The task of linear inversion is to determine the real coefficients $a_i$ through the recorded measurement outcomes. Linear inversion is often considered bad practice as it can yield unphysical estimates \cite{BlumeKohout2010}. It is advisable to use estimators such as the likelihood-based Bayesian mean estimator (BME) \cite{BlumeKohout2010, Gebhart2023}, or maximal likelihood estimator (MLE) \cite{Lvovsky2004}, which relies directly on the likelihood function 
\begin{equation}
    \mathcal{L}_{M}(\rho) \propto \Pi_i  \Tr(\rho M_i)^{n_i}=\left(\Pi_i \Tr(\rho M_i)^{\hat{p}_i}\right)^N,
    \label{eq:Likelihood_function}
\end{equation}
which represents the likelihood that the prepared state of the system was $\rho$, given that outcome $i$ has been observed $n_i$ times, or, equivalently with outcome frequency $\hat p_i = \tfrac{n_i}{N}$.
In our notation, the subscript $M$ indicates what POVM was used for the measurement. The estimated state is the integrated mean 
\begin{equation}
    \rho_\text{BME}\propto \int d\rho \, \rho \mathcal{L}_M(\rho)
\end{equation} for BME, and the maximizer \begin{equation}
    \rho_\text{MLE}=\underset{\rho}{\operatorname{argmax}} \, \mathcal{L}_M(\rho)
\end{equation} for MLE.

\subsection{Infidelity}
\label{infidelity_section}
Quantum infidelity is well-suited as a figure of merit for successful quantum state reconstruction. We seek to minimize the infidelity, defined as \footnote{ $\sqrt{\rho}$ is the square root of the eigenvalues of $\rho$ in an eigendecomposition $\sqrt{\rho}=V\sqrt{D}V^{-1}$, where $\sqrt{D}=\text{diag}(\sqrt{\lambda_1},\sqrt{\lambda_2}, \dots , \sqrt{\lambda_n})$.}
\begin{equation}
    I(\rho,\sigma)=1 -F(\rho,\sigma)=1-\left[\Tr(\sqrt{\sqrt{\rho}\sigma\sqrt{\rho}})\right]^2,
\end{equation}
where $F(\rho,\sigma)$ is the quantum fidelity \cite{Jozsa1994}.
It acts as a pseudo-distance measure, and is asymptotically close to the Bures distance when $1-F(\rho,\sigma)\ll 1$ \cite{Hbner1992,Uhlmann1976}.

In this paper, we only consider reconstruction of pure target states $\rho$, for which the infidelity simplifies to
\begin{equation}
    I_{\text{pure}}(\rho, \sigma) = 1 - \Tr(\rho \sigma).
\end{equation}

The infidelity between a sampled target state $\rho$ and the reconstructed state $\sigma$ is expected to decrease with  $I(\rho,\sigma) \propto N^{-\alpha}$ where $N$ is the number of shots performed and $\alpha$ is a asymptotic scaling coefficient that depends on features of the estimation problem \cite{Massar1995,Bagan2006B,Huszr2012}, such as the deviation from measuring in the target states eigenbasis, and the purity of the target state.

\section{Noise mitigated quantum state tomography by detector tomography}
\label{sec:noise_mitigated_QST}
In contrast to previous works, which view error mitigation as the correction of outcome statistics (see, e.g. Ref.~\cite{Maciejewski2020}), our approach integrates the error mitigation procedure directly into the quantum state tomography framework. This comes with crucial advantages, as we will see, because it allows for mitigation of a broader class of readout errors and intrinsically yield physical reconstruction without any additional corrective steps. We stress that this and similar protocols only work on mitigation of the statistical behaviour of the quantum system, and is not suited for mitigation on the level of single shot readout. 

\subsection{Quantum detector tomography}
\label{subsec:quantum_detector_tomography}
QDT \cite{Lundeen2008} can be summarized as reconstructing the effects associated with a given measurement outcome, based on observed outcomes from a set of calibration states. Another, perhaps more instructive, view is that QDT finds a map between the ideal and actual POVM $\{M\}\rightarrow\{\Tilde{M}\}$. Such a map could be used to extract noise parameters or learn the general behavior of the measurement device. However, our protocol avoids explicitly reconstructing such a map, and does not even require the knowledge of the ideal POVM $\{M\}$. 

QDT starts out by preparing a complete set of calibration states, which span the space of all quantum states, and repeatedly performs measurements on these states. An example is the (over)complete set of Pauli density operators, $\left\{\ket{0_i}\bra{0_i}, \ket{1_i}\bra{1_i}\right\}$, where $i \in \{ x, y, z\}$, which spans the space of single qubit operators. With the outcomes from all of these measurements, we can set up a linear set of equations given by Born's rule,
\begin{equation}
    \frac{n_{is}}{N}= p_{is}=\Tr(\rho_s M_i),
    \label{eq:QDT_set_of_equations}
\end{equation}
where the subscript $s$ iterates over the set of calibration states, e.g. $\{\rho_1=|0_x\rangle \langle 0_x| $ , $\rho_2=|1_x\rangle \langle 1_x| , \dots \}$. Eq.~\eqref{eq:QDT_set_of_equations} represents a set of $I\cross S$ constraints on $\{M_i\}$ we seek to reconstruct, where $S$ is the total number of unique calibration states prepared and $I$ is the number of POVM elements. In addition, one has the normalization constraint on the effects, $\sum_i M_i=\mathbb{1}$. Altogether, this yields a statistical estimation problem of very similar nature to the one outlined in state tomography. It can be solved, for example, by using a maximal likelihood estimator, such as the one outlined in Ref.~\cite{Fiurek2001}, which guarantees physical POVM reconstructions. 

\subsection{Readout error mitigated tomography}

\begin{figure}[t]
    \def\svgwidth{.5\textwidth}
    \input{Figures/Graphics/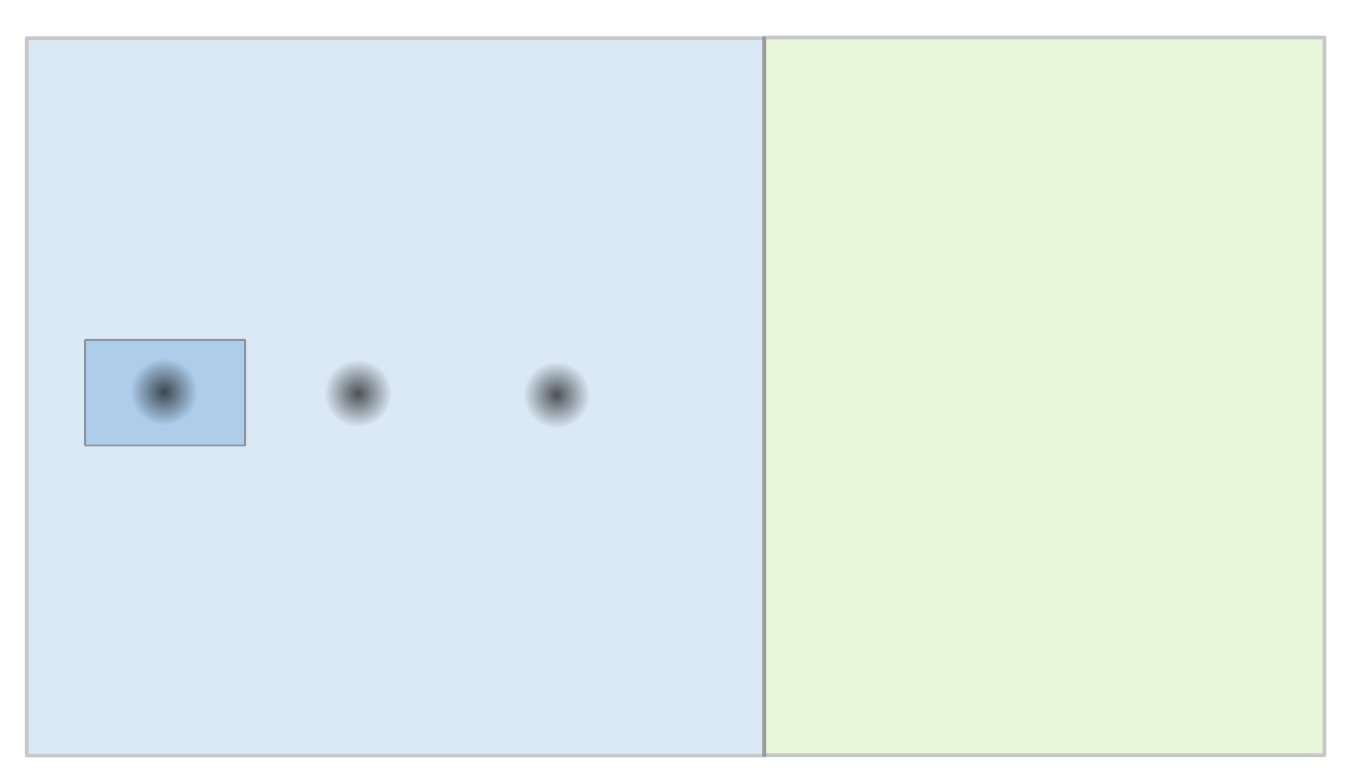_tex}
    \caption{Protocol schematic overview. \textbf{(a)} Detector tomography: A complete set of basis states (e.g. the Pauli states) are prepared and measured repeatedly by the experiment. Based on the outcomes of the measurements, a POVM is reconstructed. In a sense, it associates the measured outcome (here visualized as spin up/down measurements) with a measurement effect $\tilde M_i$. \textbf{(b)} State tomography: Using the reconstructed POVM, the modified likelihood function is endowed with knowledge of the operation of the measurement device. The system of interest is then prepared and measured repeatedly with the desired number of shots.}
    \label{fig:protocol}
\end{figure}

The core idea of our protocol is to use QDT as a calibration step before the state reconstruction. Using the information gathered from reconstructing the measurement effects, we modify the standard state estimator using the passive picture of noise, eq.~\eqref{eq:passive_noise}. In this way, the estimator is ``aware" of the noise and corresponding errors present in the measurement device.

A schematic overview of the protocol is given in Fig.~\ref{fig:protocol}. The first step is to reconstruct the POVM of the measurement device using detector tomography. This gives us access to an estimated noisy POVM $\{\tilde M^{\text{estm}}_i\}$. This noisy POVM is fed into the quantum state estimator, giving us the likelihood function
\begin{equation}
\mathcal{L}_{\tilde M^{\text{estm}}}(\rho)  \propto \Pi_i  \Tr(\rho \tilde{M}_i^{\text{estm}})^{n_i}=\left(\Pi_i (\Tilde{p}_i)^{\hat{p}_i}\right)^N,
    \label{eq:Adapted_likelihood_function}
\end{equation}
where we have used that $\tilde p_i = \Tr(\rho \tilde M_i^{\text{estm}})$. This does indeed give us an estimator that converges to the noiseless state $\rho$, see Appendix~\ref{appendix:Estimator_convergence} for more information. We highlight that working in the passive noise picture comes with benefits not enjoyed by other estimators. Firstly, no inversion of a noise channel is required, and secondly, the reconstructed state is guaranteed to be physical. Everything is handled internally by the estimator.

Our protocol assumes the following experimental capabilities:
\begin{itemize}
    \item Access to an IC POVM.
    \item Perfect state preparation.
\end{itemize}
The first capability is required by any full state reconstruction method, which can be done by any quantum device that has the ability to perform single qubit rotations and readout \footnote{By single qubit rotations we mean any operation on a quantum device that can be decomposed into tensor products of single qubit rotations acting on each qubit. Similarly for readout, one only needs tensor products of single qubit POVMs.}. This is also important for the noise mitigation to be general, since an unambiguous characterization of the erroneous state transformation inevitably requires an IC POVM. 
The second assumption is more problematic, as this is not strictly fulfilled in any real experiment. To circumvent this problem one needs to consider state preparation and readout errors in a unified framework which is an active field of research. For this work it suffices to make sure that the state preparation errors are small compared to the readout errors. We discuss this further in Sec.~\ref{sec:state-preparation-errors}. For the explicit implementation of the protocol used in the remainder of the paper, see Appendix~\ref{appendix:explicit_protocol}.

\section{Experimental realization}
\label{sec:experimental_realization}

We implemented the protocol introduced in the previous section on a fixed-frequency transmon qubit \cite{Koch2007, Kjaergaard2020} with frequency $\omega_{01} = 6.3\, \text{GHz} $ coupled to a resonator with frequency $\omega_r = 8.5\,\text{GHz}$ in the dispersive readout scheme.
Typical coherence times were observed to be $T_1 = 30\pm5\,\text{$\mu$s}$ and $T_2 = 28\pm6\,\text{$\mu$s}$.
For more information about different types of superconducting qubits, in particular their operation and noise sources affecting them, we refer to Ref.~\cite{Krantz2019}.

\subsection{Experimental setup}
A schematic diagram of the setup is shown in Fig.~\ref{fig:cryo_setup}.
All measurements without an explicitly stated temperature were performed at $10\,\text{mK}$, using a dry dilution refrigerator. 
At room temperature we use a time-domain setup with superheterodyne mixing of readout pulses triggered by a high-frequency arbitrary wave generator (AWG) responsible for local oscillator readout and pulsed manipulation tones. Signal lines are attenuated by $70 \, \text{dB}$ distributed over various stages, as seen in Fig.~\ref{fig:cryo_setup}.
The measured signals are converted into complex numbers in the so-called IQ plane \footnote{IQ abbreviates in-phase and quadrature, the labels used for the two signals recorded for each measurement. These two signals can be transformed equivalently into amplitude and phase of the signal. By doing many measurements on the system, two distinct distributions become visible, each corresponding to the collapsed state post-measurement: $\ket{0}$ and $\ket{1}$}. With optimized experimental calibrations and no induced noises, we can distinguish between two non-overlapping Gaussian distributions corresponding to the two possible outcomes of the measurement, with the qubit state being projected to either ground state $\ket{0}$ or excited state $\ket{1}$. For examples of how these distributions change in the presence of noise, see Fig.~\ref{fig:IQ-scatter} in Appendix~\ref{app:experimental calibrations}.
Unless otherwise stated, we perform measurements at the best separation settings for the classification model. This usually corresponds to approx. $98\%$ distinguishability, meaning that $98\%$ of shots can be correctly categorized as spin up or spin down.

\begin{figure}[t]
    \includegraphics[width=.9\linewidth]{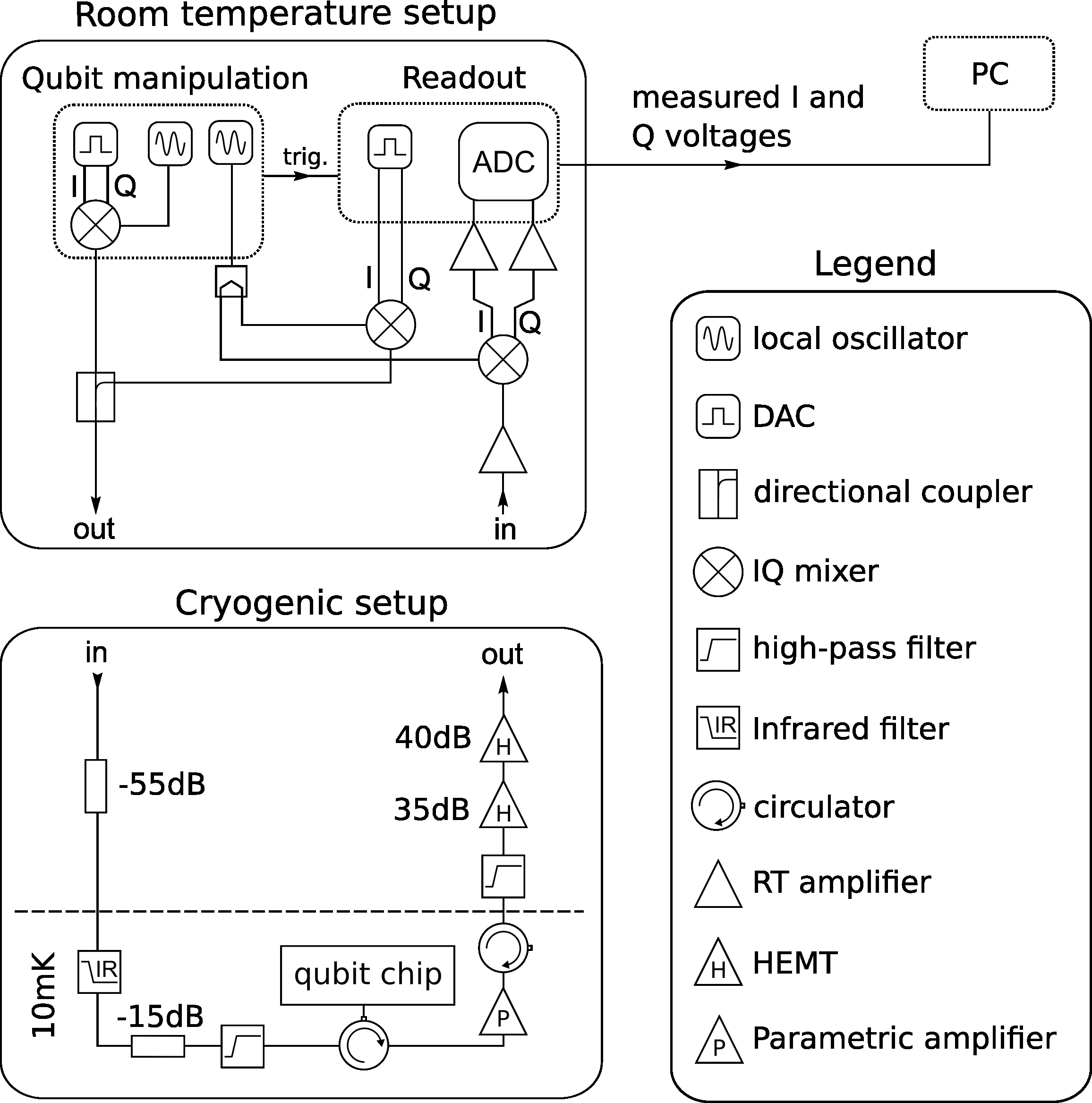}
    
    \caption{\justifying Microwave measurement setup. The local oscillator (LO) and phase-shifted pulsed IQ signals are mixed to provide the readout signal for the dispersive measurement of the qubit. A separate $\omega_{01}$ pulsed signal manipulates the qubit. The attenuators on the mixing chamber stage are carefully thermalized to $\text{mK}$ temperatures to mitigate extra thermal noise. A wideband Josephson travelling-wave parametric amplifier (TWPA) and the two HEMT-based amplifiers provide quantum-limited readout.}
    \label{fig:cryo_setup}
\end{figure}

\subsection{Measurement protocol}
\label{sec:measurement_protocol}

For each noise source we study, we set up and execute our experiment as follows:

\begin{enumerate}
    \item Estimation of $\pi$-pulse length ($T_\pi$) by Rabi-oscillations close to the qubit frequency.
    \item Refined measurement of $\omega_{01}$ by a Ramsey-experiment.
    \item Measurement of $T_{\pi}$ from a Rabi-measurement with updated $\omega_{01}$ from 2.\ by fitting a decaying sine function to the data.
    \item Ground, $\ket{0}$, and excited states, $\ket{1}$,  are measured with a single-shot readout. The location of the states in the IQ plane is learned by a supervised classification algorithm. For more information, see Appendix~\ref{app:experimental calibrations}.
    \item QDT is performed by preparing each of the six Pauli states, and measuring them in the three bases $\sigma_x, \sigma_y$ and $\sigma_z$, which is described by the Pauli-6 POVM. The first two of these measurements are done as a combination of qubit rotation and subsequent $\sigma_z$ readout. Quantum states are prepared using virtual Z-gates.
    \item QST is performed and averaged over 25 random quantum states $U \ket{0}$, where $U$ is a random unitary (Haar-random). Each state is measured in the three Pauli bases. 
\end{enumerate}
Performing this experiment for various strengths of the noise allows us to benchmark the ability of the protocol to mitigate the given noise source.
For a schematic diagram of the measurement pipelines, see Fig.~\ref{fig:Experimental_flow}.

The outcome of two example experimental runs are shown in Fig.~\ref{Fig: QST-improved}. Both standard and mitigated QST infidelities are extracted on a shot-by-shot basis. Typical features include a priori infidelity of roughly $0.5$, with a rapid power-law decay, saturating at a given infidelity level for unmitigated QST. After this saturation is reached, further measurements will not improve the quantum state estimate because the measurements performed on the system are noisy. By using quantum readout error mitigation (labeled as QEM), we can significantly lower infidelity, enabling more precise state reconstruction. For all averaged experiments, the mitigated QST infidelities are consistently below the unmitigated QST infidelities.

\begin{figure}[t]
    \includegraphics[width=.75\linewidth]{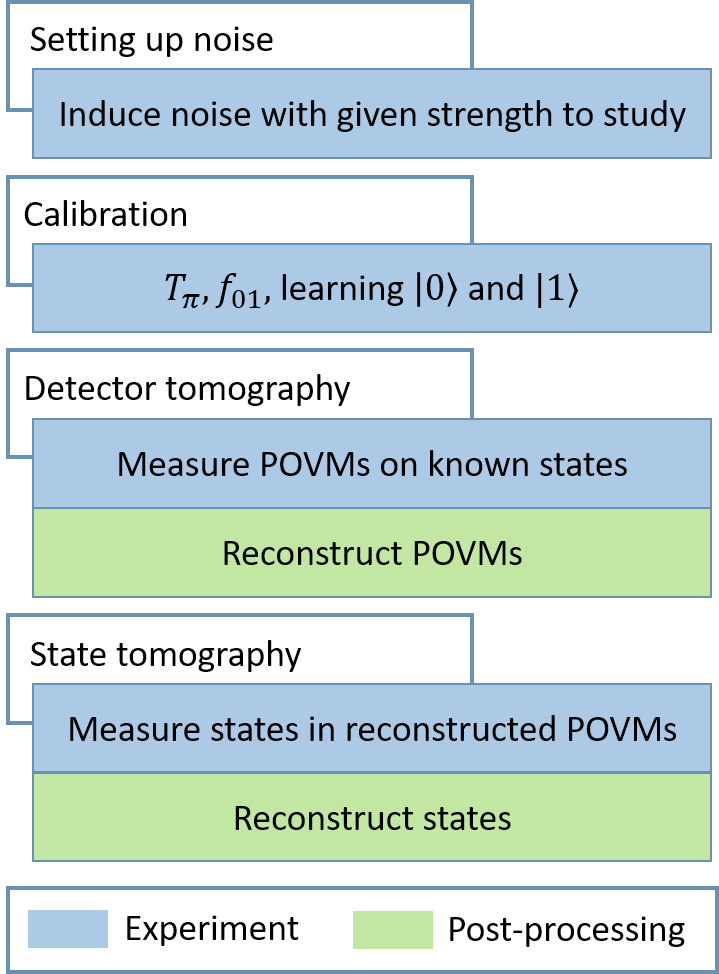}
    \caption{ \justifying Schematic of the experimental pipeline of the protocol. In the first step, a given noise is induced with a specific strength. The experimental readout is calibrated for this noise. Detector tomography is performed, which reconstructs the noisy Pauli POVM $\{\Tilde{M_i}\}$. Finally, quantum state tomography is executed and reconstruction infidelity is evaluated and averaged over a set of randomly chosen target states. } 
    \label{fig:Experimental_flow}
\end{figure}

\begin{figure*}[t]
    \includegraphics[width=\linewidth]{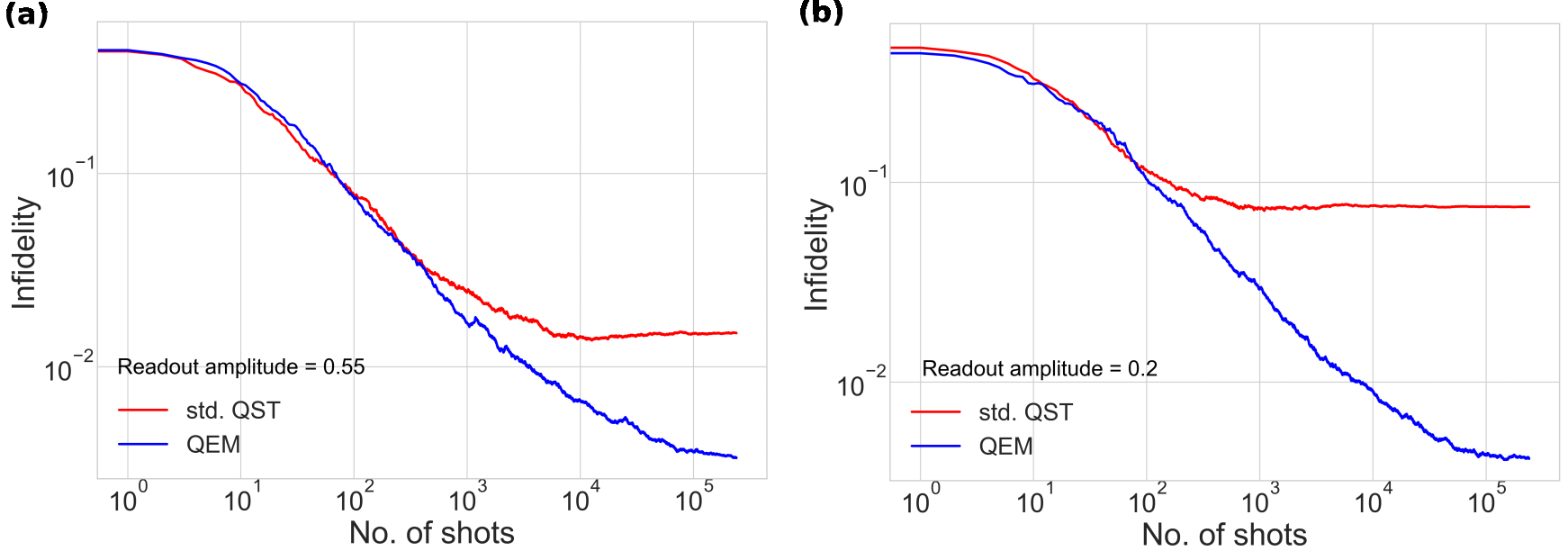}
     \caption{Mean quantum infidelity of reconstructed quantum states as a function of number of shots for \textbf{(a)} optimal readout powers and \textbf{(b)} weak readout. The red lines represent unmitigated, standard QST, blue lines are obtained with error-mitigated QST. Error mitigated QST reaches a lower infidelity saturation value than standard reconstruction in both cases. At low readout power, error mitigated QST can still reconstruct the state with similar accuracy, while standard QST saturates at a significantly higher value.}
    \label{Fig: QST-improved}
\end{figure*}

\begin{figure*}[t]
    \includegraphics[width=\linewidth]{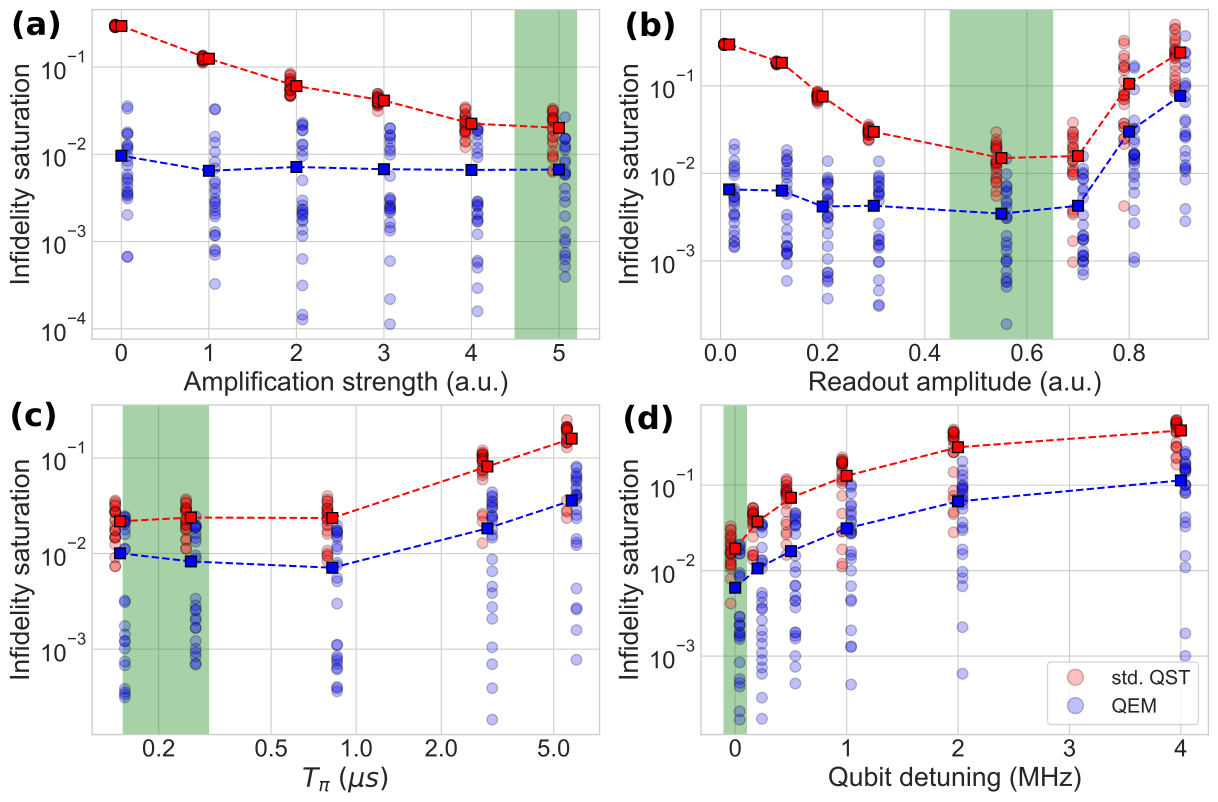}
    \caption{ Characterizing error mitigation by QDT for QST for different noise sources. Infidelity saturation refers to the last infidelity point measured over 240k single-shot measurements. The infidelity saturation for each individual run is plotted in translucent circles and shifted off center, to the left for standard QST and to the right for QEM, for better visibility. The solid colored squares are the average infidelity saturation, connected by dotted lines for guidance. The green highlighted areas indicate the optimal experimental parameters.  \textbf{(a)} Decreasing parametric amplification has a significant effect on QST through lesser distinguishability of the two states. Such errors can be mitigated very effectively by detector tomography. Zero amplification strength corresponds to having turned off the amplifier. \textbf{(b)} An incorrectly set readout amplitude of the resonator leads to increased infidelity in both the too weak and too strong readout regimes. Mitigation fails at higher powers, because higher levels are excited that are not taken into account. \textbf{(c)} Increasing manipulation timescales leads to more $T_1$ and $T_2$ decay events. This can be mitigated to some extent by the protocol, as seen by the smaller gradient of the mitigated infidelity curve. \textbf{(d)} Detuning between the qubit transition and drive frequencies can be efficiently mitigated. The mitigated infidelities rise proportionally to the unmitigated ones.
    }
    \label{Fig: fidelity-gains-4}
\end{figure*}

\subsection{Error sources}

 To probe the generality and reliability of our error-mitigation protocol, we artificially induce a set of noise sources, which introduce readout errors. In particular, we study:
\begin{enumerate}

    \item Errors introduced by insufficient readout amplification: study through variation of the amplification of the parametric amplifier.
    \item Errors introduced by low resonator photon number: study through variation of photon population in resonator by a variation of the readout amplitude.
\end{enumerate}
In addition, we investigate two noise sources which manifest not only as as readout errors, but also as state preparation errors. We discuss the implications of state preparation errors in more detail in Sec.~\ref{sec:state-preparation-errors}:
\begin{enumerate}
\setcounter{enumi}{2}
    \item Errors introduced by energy ($T_1$) and phase relaxation ($T_2$): study through variation of the drive amplitude.
    \item Errors introduced by qubit detuning: study through the variation of the applied manipulation pulse frequency.
\end{enumerate}
Lastly, we combine multiple error sources to simulate a very noisy experiment, where conventional reconstruction methods fail.

Since the infidelity scaling is state-dependent, each experiment is averaged over 25 Haar-random states to get a reliable average performance (see Appendix~\ref{appendix:explicit_protocol} for more information). For QDT, we perform a total of $6\cross 3 \cross80000$ (6 states, 3 measurement bases) shots, for the following QST $25 \cross 3 \cross 80000$ (25 states, 3 measurement bases, 80000 shots each).

\subsection{Experimental results}
\label{sec:Results}

\subsubsection{Insufficient readout amplification}
We study the influence of insufficient signal amplification by the wide band travelling wave parametric amplifier, in the following called amplifier. All other amplifiers operate with their optimal operational settings.
We tune the amplification through the power of microwave drive of the amplifier. This results in a reduced distinguishability of the two states. The measured saturation values of infidelity for different amplifications is given in Fig.~\ref{Fig: fidelity-gains-4}~\textbf{(a)}. Lowering the amplification reduces distinguishability in the IQ plane, resulting in high infidelities of around 0.3 with the amplifier turned off. The readout error mitigation protocol performs very well against this type of noise. When error mitigation is used, the reconstruction infidelity does not increase unless the amplifier is turned off completely, indicating that readout error mitigation is effective against this type of noise.

\subsubsection{Readout resonator photon number}

Ideally, superconducting qubits are read out with low resonator populations. Higher power readout will excite the qubit, potentially also to higher states outside the qubit manifold. Low readout power also enables non-demolition experiments, e.g. for active reset. If the readout power is too low, the IQ-plane state separation is not sufficient for single-shot readout, increasing the infidelity, an effect that our protocol can mitigate, as can be seen in Fig.~\ref{Fig: fidelity-gains-4}~\textbf{(b)}. This effectively constitutes a lower signal-to-noise ratio, similar to a low parametric amplification. At higher readout power the higher transmon states excited by the strong readout are not captured in the two-level quantum system simulation, resulting in increased infidelities for both mitigated and unmitigated reconstruction. We can also see a higher spread of unmitigated infidelities when using readout amplitudes above and including 0.55. A possible explanation for this lies in the fact that the $0\rightarrow2$, $0\rightarrow 1 \rightarrow 2$ and $1\rightarrow2$ processes have different frequencies, hence these transitions will be induced spontaneously by readout signals with a probability dependent on the current quantum state. This results in a state-dependence of higher state excitations, and therefore increased infidelity. We note that the infidelity ratio between standard QST and QEM becomes constant at large readout amplitudes.

In the following, we study noise sources that do not exclusively manifest as readout errors, as they also introduce significant errors in state preparation, which breaks one of the core assumptions of the protocol. It is nonetheless insightful to investigate the protocol's performance with such noise sources. Further experiments are required to determine the efficacy of the protocol under such conditions.

\subsubsection{Shorter \texorpdfstring{$T_1$}{T1} and \texorpdfstring{$T_2$}{T2} times}
Decoherence is modeled by exponential decay in fidelity in the $z$ ($T_1$) and azimuthal ($T_2$) directions of the Bloch sphere. Thus, we can increase the number of decay events by increasing the length of manipulation pulses, used for state preparation and readout rotations. Experimentally, this is done by decreasing the manipulation power.

Fig.~\ref{Fig: fidelity-gains-4}~\textbf{(c)} shows the results for varying the $\pi$-pulse lengths.
At increased manipulation lengths, the qubit has more time to decay both through dephasing and energy loss, resulting in a larger error probability. Both unmitigated and mitigated state reconstruction gets progressively less accurate with increased manipulation lengths, with a noise-strength independent factor of 3-4 between them. 
The combined effect of decoherence on all three basis measurements is expected to be highly state-dependent, which can explain the large infidelity saturation spreads at a given noise strength.

Longer readout pulses lead to state decay already in the state preparation stages, hence the studied noise source is not only a readout error, but falls into the broader SPAM error category. One could restrict induced decay to purely affect readout, e.g. by only decreasing the manipulation power at the readout stage, but this would arguably not correspond to a realistic experimental scenario. Instead, consistently decreasing the manipulation power presents a more relevant experimental scenario.

\subsubsection{Qubit detuning}

If we consistently apply pulses detuned in frequency, we observe an apparent infidelity reduction by a factor of 4 by the protocol across all detunings, as shown in Fig.~\ref{Fig: fidelity-gains-4}~\textbf{(d)}. One can also see, that even a small detuning of $0.1 \,\text{MHz}$ results in lower infidelity than standard QST with perfect frequency-matching. 
We note that the infidelity saturation of QEM grows immediately with detuning, meaning that the protocol is not able to completely mitigate even small detunings.
This can be attributed to the detector tomography step also suffering severely from effects of state preparation inaccuracies. 
As in the case of increased manipulation times, one could conduct an experiment where the detuning only affects the readout stage by applying resonant pulses for state preparation. As this also does not correspond to a realistic experimental scenario, the more relevant approach is to apply consistently detuned pulses.

At $4 \, \text{MHz}$ detuning a pulse of $200 \, \text{ns}$ will result in an accumulated phase offset of $0.8$, effectively scrambling $\sigma_x$ and $\sigma_y$ measurement outcomes.
Both mitigated and conventional QST infidelity saturation depend roughly linearly on detuning, flattening out when approaching an infidelity of 0.5.
The large spreads of the measured infidelity bounds stem from the fact that the effect of the noise depends on the state.

Noise that effectively manifests as a large frequency detuning can be induced, for example, by spontaneous iSWAP ``operations" with neighboring two-level-systems (TLS). Microwave sources may also suffer from frequency drifts. The errors coming from microwave source offsets are much smaller than the considered detunings. However, TLS-enabled jumps of more than $10 \, \text{MHz}$ have been observed on the same timescale as one run of QST takes for our experiment \cite{Meiner2018}, making it a potential noise source corresponding to larger qubit detunings. 
It is important that these jumps do not occur after the device tomography stage, otherwise the reconstructed POVMs will not reflect the jump.

We remark that for all the shown cases where readout error mitigation does not work optimally, specifically in \textbf{(c)}, \textbf{(d)} and the right part of \textbf{(b)} in Fig~\ref{Fig: fidelity-gains-4}, the ratio between standard QST and QEM infidelities is approximately constant. 
For noise sources that do not exclusively affect the readout process a deterioration of the noise mitigated reconstruction fidelity is expected as the noise strength increases. However, the observation of proportionality between the mitigated and unmitigated infidelities requires further investigation.

\subsubsection{Combining noise sources}

Several of the previously studied noise sources can be combined in different ways to simulate a noisy experiment. To demonstrate the efficiency of the mitigation ad extremum, we devise an experiment with comparably high levels of noise, by measuring with off-resonant drive of $\delta \omega = 0.5 \, \text{MHz}$, decreased parametric amplification and with decreased readout amplitude. While conventional QST saturates very early in the reconstruction, few-percent infidelities are possible by employing error mitigation, thereby opening the way for precise measurements with very noisy readouts, see Fig.~\ref{fig:mixed_noises}. As expected (see Sec.~\ref{infidelity_section}), one can observe a linear dependence of infidelity on shot number on a log-log scale, corresponding to a power-law convergence.

\begin{figure}[t]
    \includegraphics[width=\linewidth]{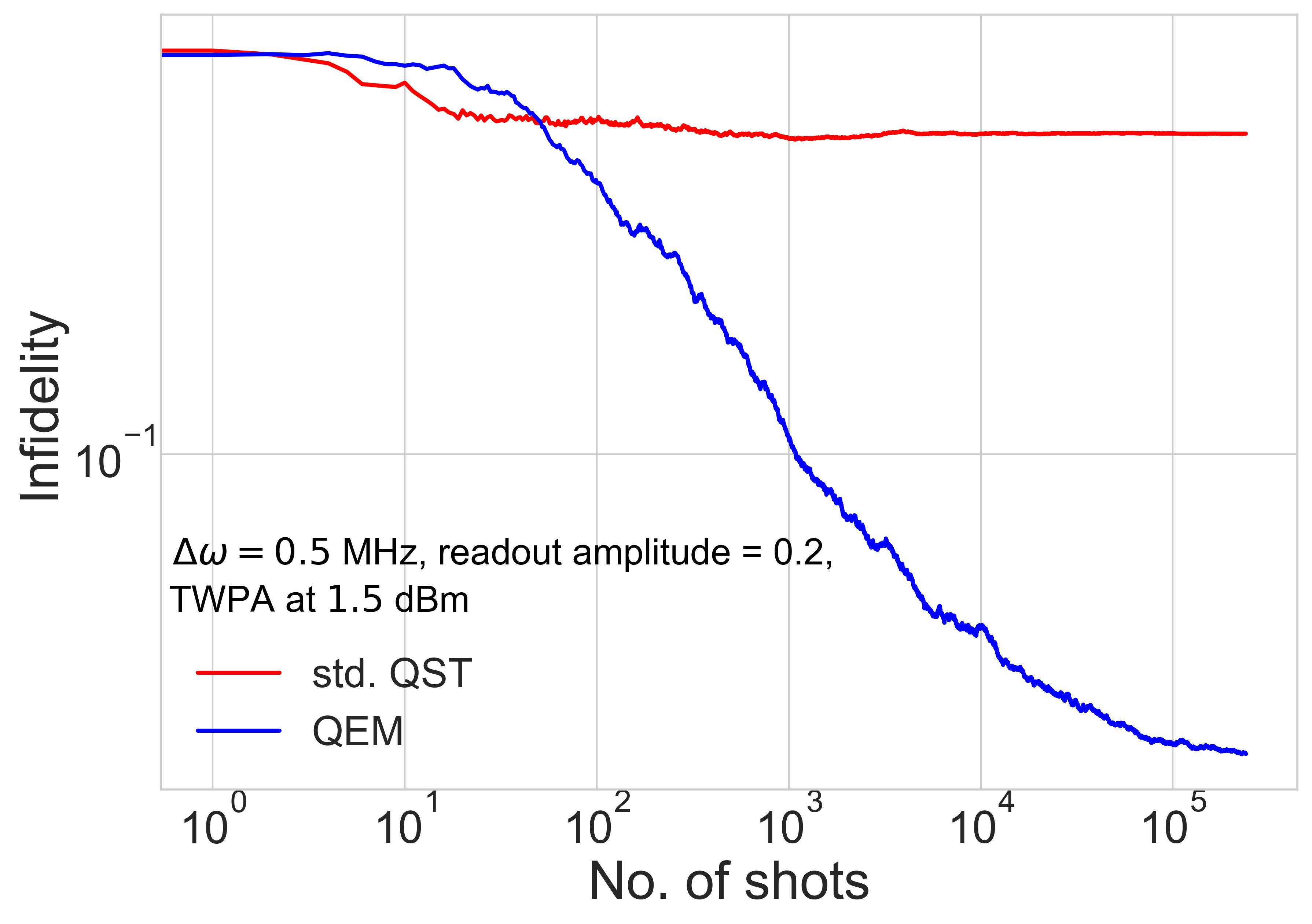}
    \caption{\justifying 
      Combination of multiple strong noise sources: Conventional QST saturates at around 100 shots, while using QDT enables state reconstruction with a factor of 30 lower infidelity. }
    \label{fig:mixed_noises}
\end{figure}

\section{Protocol limitations}
\label{sec:protocol_errors}
We present an analysis of the potential limitations to the performance and reliability of the error mitigation protocol. Note that most of the discussed limitations will affect any estimator, and are not unique to our protocol. It is still important to understand their impact on protocol performance.

\subsection{Sample fluctuations}
\label{sec:sample_fluctuations}
A central limiting factor to the precision of any estimator are sample fluctuations, i.e. statistical fluctuations in the number of samples per effect. These fluctuations can manifest themselves in two parts of the protocol, QDT and QST, and cannot be mitigated.

\begin{figure}[t]
    \includegraphics[width=\linewidth]{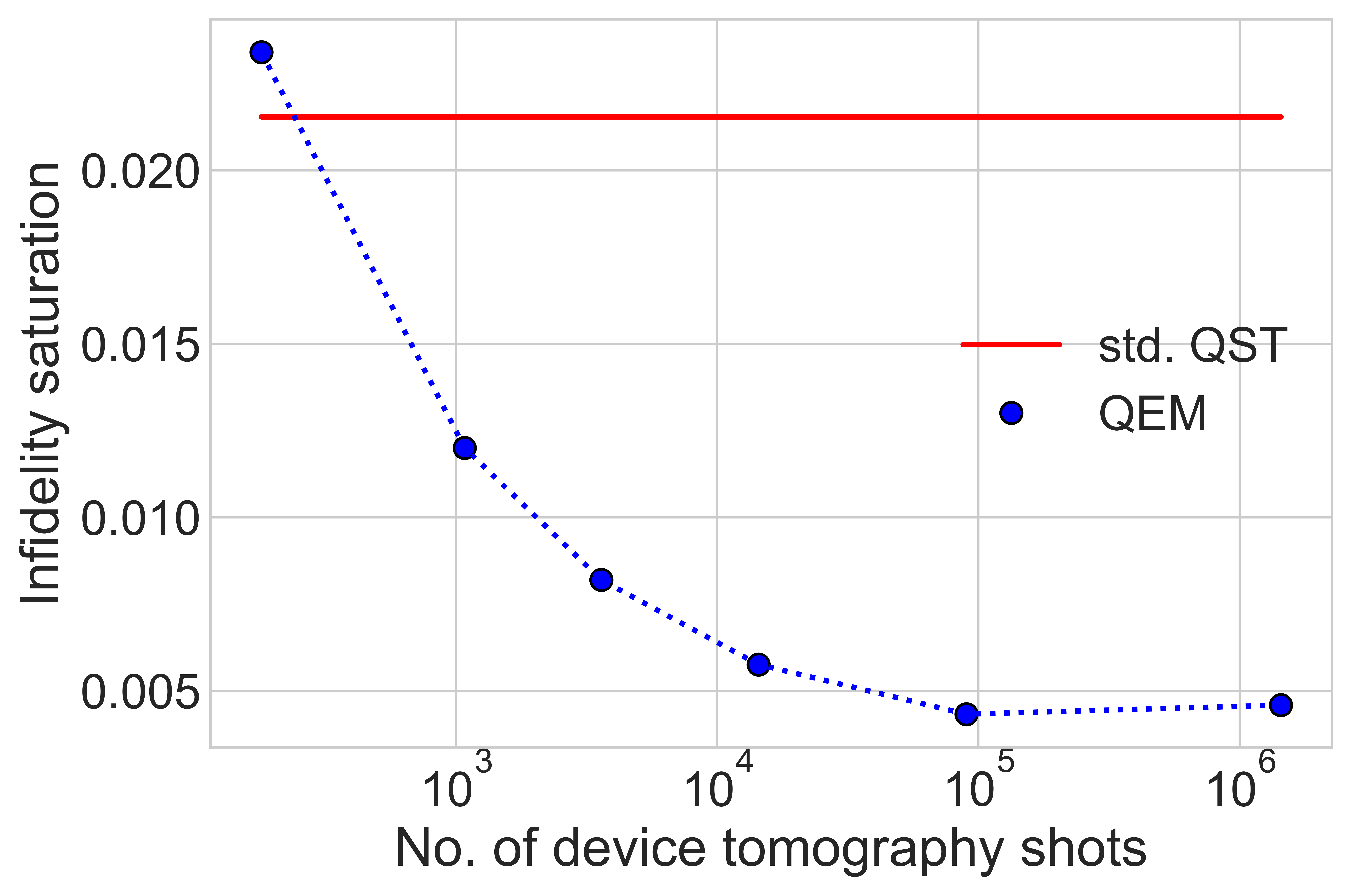}
    \caption{\justifying 
     Averaged infidelity saturation in QST as a function of calibration shots used in QDT. 240k shots were used for each QST reconstruction, averaged over 25 Haar-random states. Using 1k shots for calibration already results in a decrease of infidelity by a factor of $\approx 2$. The curve flattens out at around 100k calibration shots.}
    \label{fig:DT_dependence}
\end{figure}

QDT acts as a calibration step and is a one-time cost in terms of samples. Fluctuations in the POVM reconstruction can be viewed as a bias introduced into the state reconstruction. Therefore, it is important that the POVM reconstruction does not impose a bias larger than the expected sample fluctuation in the QST itself. In Fig.~\ref{fig:DT_dependence} we demonstrate that using less than $0.5\%$ of experimental shots for detector tomography lowers the reconstruction infidelity to half its value. The lowest infidelity is achieved by using ca. $10\%$ of shots for QDT, after which no meaningful further improvement was seen in our experiment. We give the recommendation of using half the number of shots used for a single QST for QDT. When averaging over e.g. 25 quantum states for a representative benchmark, this becomes a small overhead. 

\begin{figure}[t]
    \def\svgwidth{.45\textwidth}
    {\large \textbf{
\begingroup%
  \makeatletter%
  \providecommand\color[2][]{%
    \errmessage{(Inkscape) Color is used for the text in Inkscape, but the package 'color.sty' is not loaded}%
    \renewcommand\color[2][]{}%
  }%
  \providecommand\transparent[1]{%
    \errmessage{(Inkscape) Transparency is used (non-zero) for the text in Inkscape, but the package 'transparent.sty' is not loaded}%
    \renewcommand\transparent[1]{}%
  }%
  \providecommand\rotatebox[2]{#2}%
  \newcommand*\fsize{\dimexpr\f@size pt\relax}%
  \newcommand*\lineheight[1]{\fontsize{\fsize}{#1\fsize}\selectfont}%
  \ifx\svgwidth\undefined%
    \setlength{\unitlength}{595.27559055bp}%
    \ifx\svgscale\undefined%
      \relax%
    \else%
      \setlength{\unitlength}{\unitlength * \real{\svgscale}}%
    \fi%
  \else%
    \setlength{\unitlength}{\svgwidth}%
  \fi%
  \global\let\svgwidth\undefined%
  \global\let\svgscale\undefined%
  \makeatother%
  \begin{picture}(1,0.96666667)%
    \lineheight{1}%
    \setlength\tabcolsep{0pt}%
    \put(-0.63495302,2.06349255){\color[rgb]{0,0,0}\makebox(0,0)[lt]{\begin{minipage}{5.42596196\unitlength}\raggedright \end{minipage}}}%
    \put(-1.26990603,2.93354202){\color[rgb]{0,0,0}\makebox(0,0)[lt]{\begin{minipage}{6.45535583\unitlength}\raggedright \end{minipage}}}%
    \put(0,0){\includegraphics[width=\unitlength,page=1]{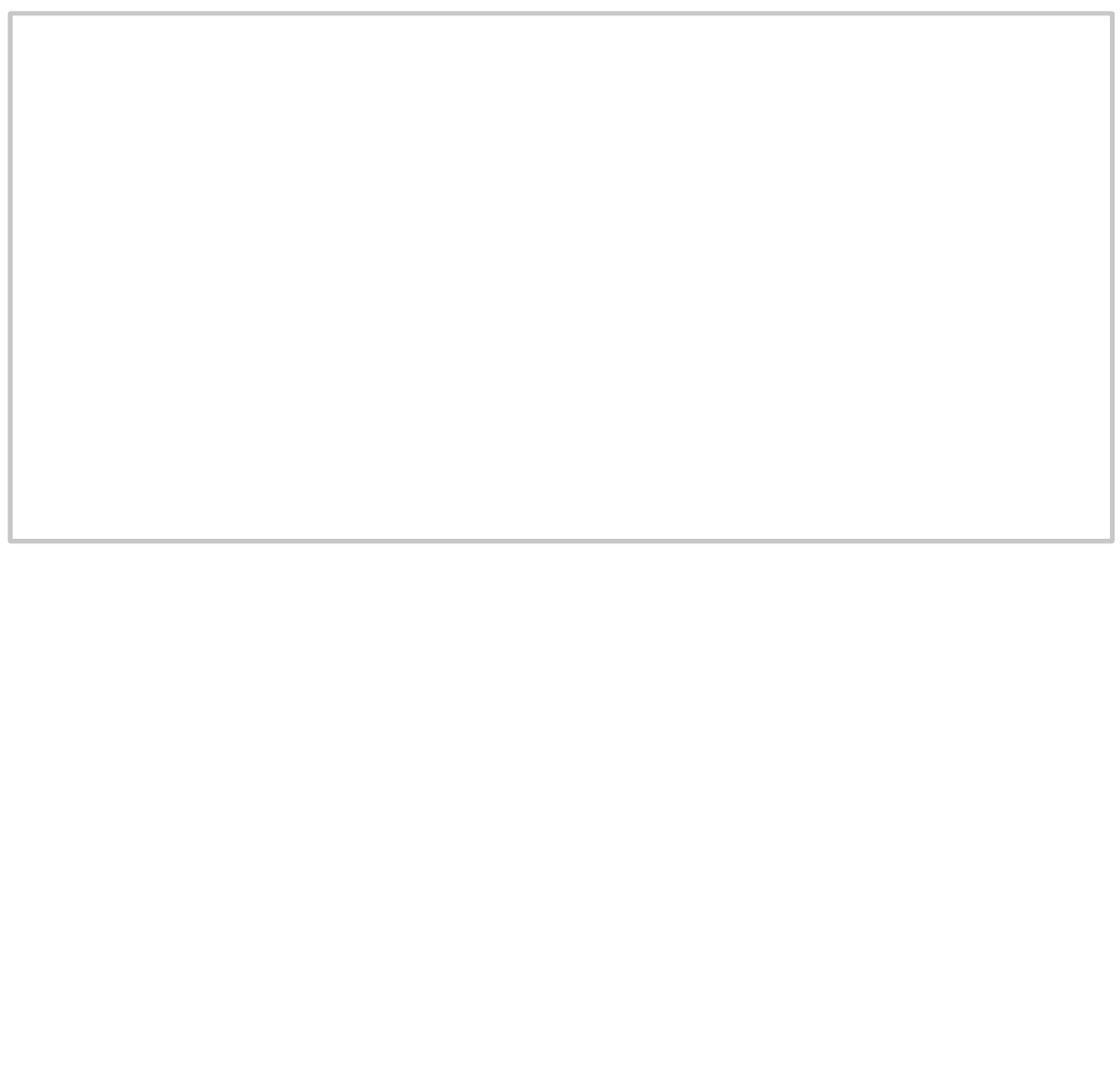}}%
    \put(0.03809734,0.89392406){\color[rgb]{0,0,0}\makebox(0,0)[lt]{\lineheight{1.25}\smash{\begin{tabular}[t]{l}(a) Active\end{tabular}}}}%
    \put(0.03809734,0.41586144){\color[rgb]{0,0,0}\makebox(0,0)[lt]{\lineheight{1.25}\smash{\begin{tabular}[t]{l}(b) Passive\end{tabular}}}}%
    \put(0,0){\includegraphics[width=\unitlength,page=2]{bias-variance.pdf}}%
  \end{picture}%
\endgroup%
}}
    \caption{\justifying  Sketch of the likelihood function on the Bloch disk. The Bloch disk represents an intersecting plane through the center of the Bloch sphere, e.g. the x-y-plane of the Bloch sphere. The red ribbon indicates the boundary of the Bloch disk, whereas the green ribbon indicates the boundary of the Bloch disk under the effect of a depolarizing channel. The green point is the true prepared state, which, due to the depolarizing readout noise, produces noisy measurement data. The red and blue points are arbitrary reference states in their active/passive noise representations. \textbf{(a)} An unmitigated likelihood function in the active representation of the noise, (see eq.~\eqref{eq:active_noise}).  \textbf{(b)}  An error mitigated likelihood function in the passive representation of noise (see eq.~\eqref{eq:passive_noise}).
    }
    \label{fig:bias-variance}
\end{figure}

\begin{figure}[t]
    \includegraphics[width=\linewidth]{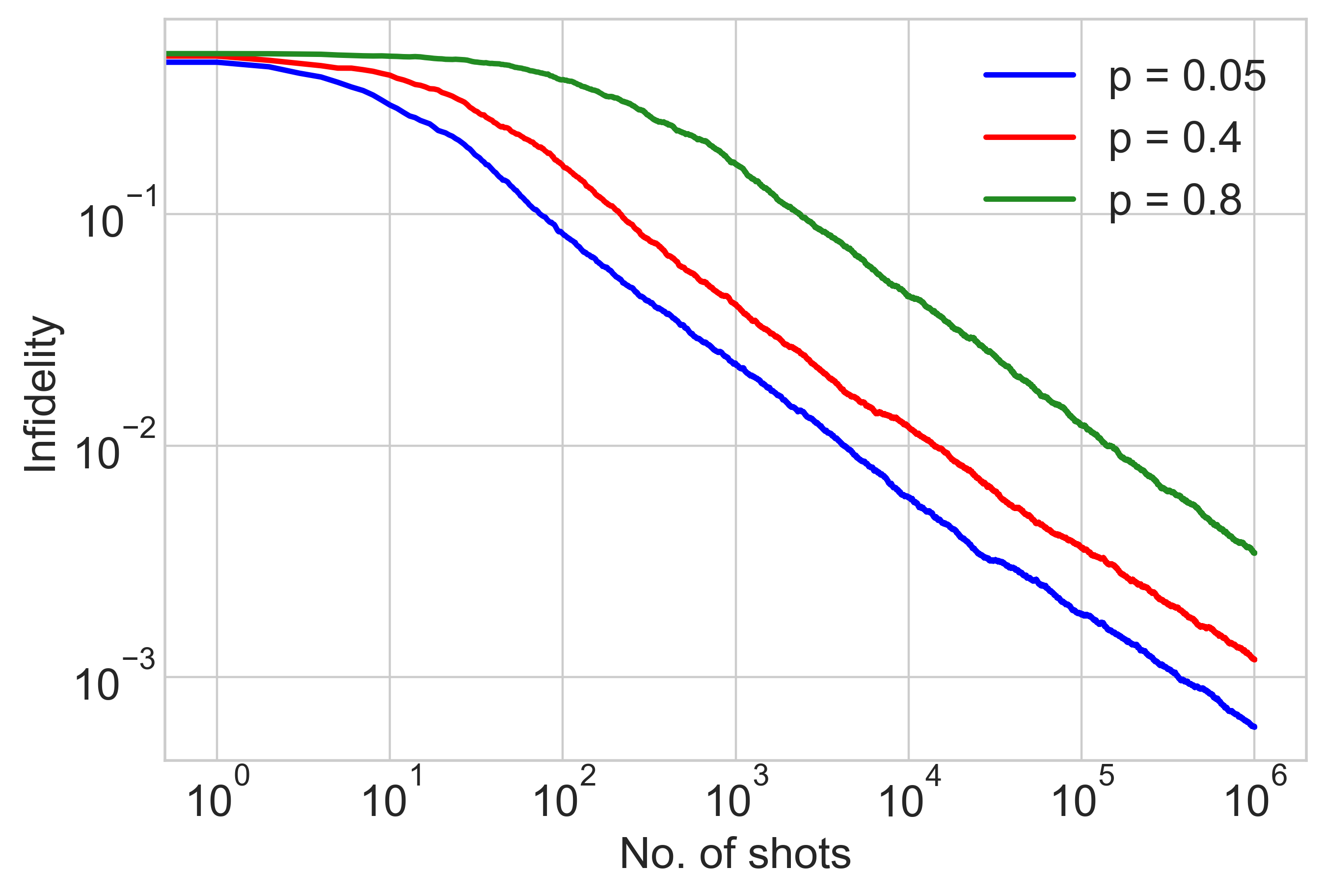}
    \caption{\justifying Reconstruction infidelity for mitigated simulated depolarizing noise for different noise strengths $p$. The infidelity is averaged over 100 Haar-random pure states. The depolarizing channel is given in eq.~\eqref{eq:depolarizing_channel}.}
    \label{fig:bias-variance_depol}
\end{figure}

Error mitigated estimators, such as the one in eq.~\eqref{eq:Adapted_likelihood_function}, aim to reduce the bias with respect to the prepared state caused by a noisy readout process. In reducing the bias we are in effect taking into account that the state has passed through a noise channel. Since noise channels reduce the distinguishability between states, the variance of the modified likelihood function is increased with respect to the unmodified likelihood function, an effect known as the \textit{bias-variance trade-off} \cite{Cai2022,Takagi2022}. In Fig.~\ref{fig:bias-variance} we give an artistic rendition of the bias-variance trade-off under mitigation of single qubit depolarizing noise, given by the channel
\begin{equation}
    \mathcal{E}(\rho)= \frac{p}{2}\mathbb{1}  + (1-p)\rho.
    \label{eq:depolarizing_channel}
\end{equation}
We compare \textbf{(a)} an unmitigated estimator in the active noise picture, to \textbf{(b)} the error mitigated estimator in the passive noise picture. In the error mitigated estimator the bias with respect to the true state (green point) is removed at a cost of higher variance of the likelihood function. We note that the bias due to the statistical fluctuations in the estimator remains, making it consistent with always-physical state estimators \cite{Schwemmer2015}.

The increased variance can be combated by increasing the number of shots used for estimation. The additional number of shots required for the error mitigated estimator to obtain the same confidence (variance of the likelihood) with respect to the true state as the unmitigated estimator with respect to the biased state is called the \textit{sampling overhead}.
The overhead relates to the distinguishability of quantum states and the data-processing inequality \cite{Takagi2022}. It tells us that the stronger the noise-induced distortion is, the larger the overhead. 
In QST, the sample overhead manifests itself as a shift in the infidelity curve. In Fig.~\ref{fig:bias-variance_depol} we have simulated varying depolarizing strengths and applied error mitigation. We see a clear shift as the noise strength increases. Note that such a shift would be present for any error mitigation schemes, including an inversion of the noise channel.

\subsection{State preparation errors}
\label{sec:state-preparation-errors}
A generic problem of QDT based methods is that they assume perfect preparation of calibration states, which is impossible in experiments. A common argument when using QDT is that the state preparation has a small error compared to the readout error itself. There are, however, methods developed to deal with the potential systematic error introduced by QDT, see e.g. Refs.~\cite{Gebhart2023, Zhang2020}. 

Another approach is to use the fact that preparation of the calibration states used in QDT only requires single qubit gates. If one has access to error estimates from single qubit gates, one could in principle also correct for state preparation errors in a similar manner to what is done in the protocol in this work (see e.g. randomized benchmarking \cite{Knill2008} for gate characterization).  

Despite our protocol requiring perfect state preparation, we have investigated noise sources that also affect state preparation, such as qubit detuning and increased $T_1$ and $T_2$ manipulation times. We will add to this investigation an experiment performed at higher qubit temperatures. We systematically increase the temperature of $10\, \text{mK}$ up to over $200 \, \text{mK}$, where the state in thermal equilibrium will have non-negligible contributions from the excited state. For example, a qubit temperature of $40\, \text{mK}$ corresponds to an excited state population of $0.05\, \%$,  and a temperature of $120\, \text{mK}$ corresponds to $7.3\, \%$  excited state population. Hence, for our experiment without active feedback, it is not possible to reliably prepare a pure calibration state. 

\begin{figure}[t]
    \includegraphics[width=\linewidth]{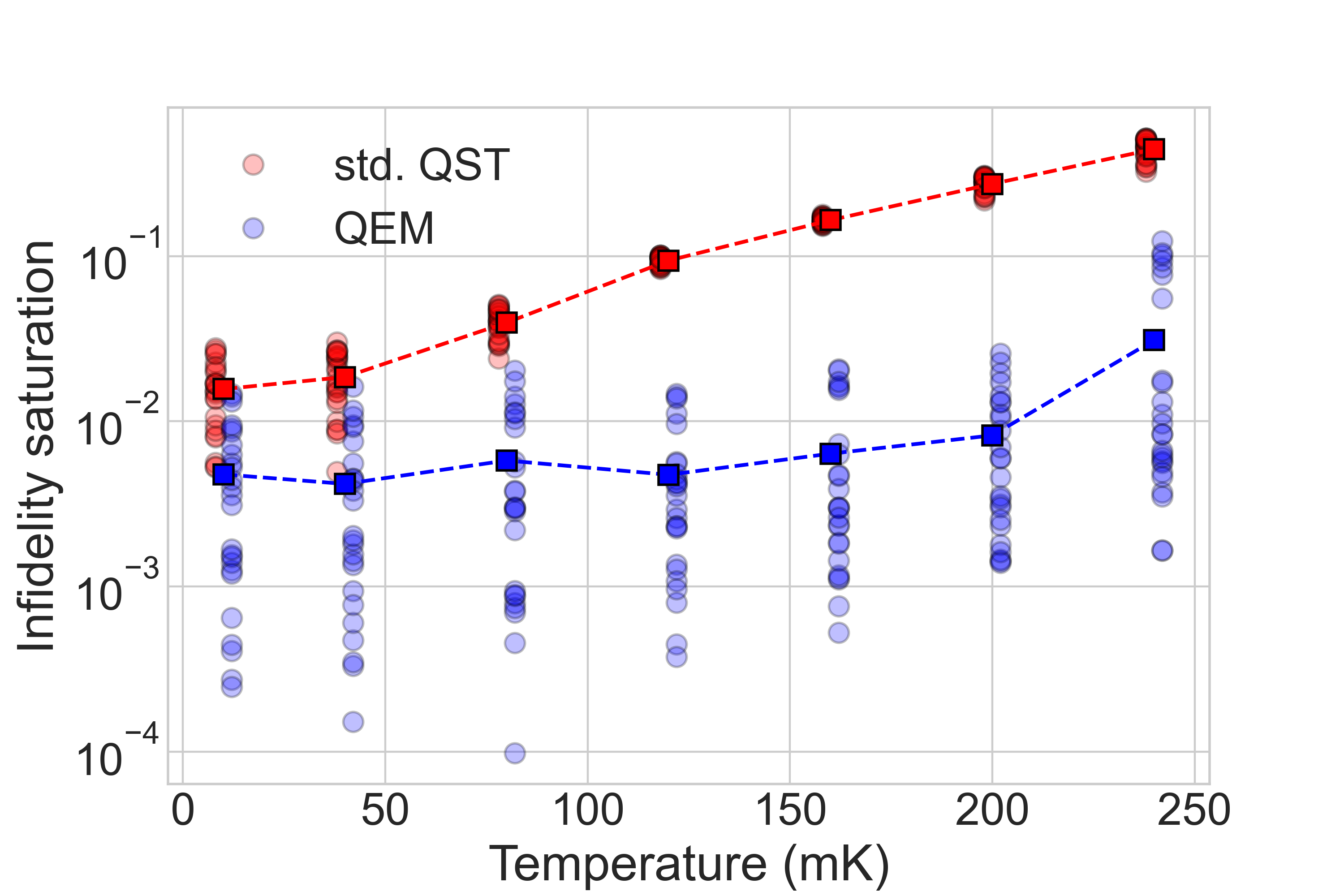 }
    \caption{\justifying 
Error mitigation applied to increased qubit temperatures. Mitigation seems to be successfully, but the assumption of perfect state preparation is broken. Averaged over 25 Haar-random pure states with otherwise optimal experimental parameters. The infidelity saturation is the infidelity averaged over the random states after 240k experimental shots.}
    \label{fig:infidelity-increased-temperature}
\end{figure}

The results in Fig.~\ref{fig:infidelity-increased-temperature} suggest that the mitigated QST is resilient to higher temperatures. The reason why we seemingly obtain a successful mitigation is that our benchmarking method does not allow us to distinguish state preparation errors from readout errors. Since our benchmarking method does not contain any additional gates between preparation and readout, the errors acquired in state preparation are interpreted as readout errors and the protocol manages to mitigate the errors accordingly. We emphasize, however, that this is only strictly true if the combined effect of state preparation and readout errors can be viewed as a single effective error channel that is independent of the prepared state. While this is the case for finite qubit temperature, which can be modeled as depolarizing noise, it is not the case in general. In particular, for the cases of qubit detuning and increased $T_1$ and $T_2$ time, the putative effective noise channel becomes dependent on the prepared state, leading to a deterioration of the mitigation efficiency. We discuss this in more detail in Appendix \ref{App:State_dependent_errors}.


\subsection{Experimental drift}
After performing QDT, our protocol assumes that the POVM stays fixed for the remainder of the measurement sequence. However, since experimental parameters drift over time, the physically realized POVM may change with respect to the POVM used for state reconstruction. Due to this drift, one would ideally recalibrate the POVM before each reconstruction. This entails a relatively large overhead, and is not feasible. 

The drift present in our experiment was small and we decided not to warrant any additional corrective measures. This is not necessarily the case in general. In these cases, we propose adding another step to the protocol in Sec.~\ref{sec:measurement_protocol}, 4: periodically perform drift measurements and recalibration. 

Drift measurements amount to measuring a set of well-known states, e.g. Pauli states, and reconstruct their density matrix. If the reconstruction infidelity goes beyond an acceptable threshold $I(\rho^{\text{estm}}, \rho)\geq \epsilon$, one can perform a recalibration of the measurement device, i.e. repeat the QDT step. This would ensure that the accuracy of the protocol does not degrade.

\section{Conclusion and outlook}
\label{sec:conclusion_and_outlook}

We have presented a comprehensive scheme for readout error mitigation in the framework of quantum state tomography. It introduces quantum detector tomography as an additional calibration step, with a small overhead cost in the number of experimental samples. After calibration, our method is able to mitigate any errors acquired at readout. Furthermore, it does not require inversion of any error channels and guarantees that the final state estimate is physical. Comparing to most previously discussed REM methods, our protocol is able to correct beyond-classical errors. To confirm that such errors indeed make up a significant part of the errors in our experiment, in Appendix \ref{appendix:Insufficiency_of_classical_error_mitigation}, we present a selection of reconstructed POVM elements from the experiment. The significant off-diagonal contributions in this analysis confirm that non-classical errors are always present and often on the same order of magnitude as the diagonal classical redistribution errors.

To probe the limits of readout error mitigation, we applied our protocol to a superconducting qubit system. We experimentally subjected the qubit to several noise sources and investigated the protocol's ability to mitigate them. We observed an improvement of the readout quality by decreasing infidelity by a factor of 5 to 30 depending on the type readout noise. The protocol was particularly effective for lowered signal amplification and decreased resonator readout power compared to standard QST. We combined multiple noise sources in an experiment where conventional state reconstruction saturated early on, whereas our method was able to precisely reconstruct the quantum state. This opens up new possibilities for systems with noisy readout where accurate knowledge of the quantum state is required. For noise sources which do not exclusively affect the readout stage the protocol did not perform optimally, and we observed a constant ratio between the infidelity of mitigated and unmitigated state reconstruction.

Potential limitations of the scheme were investigated and we presented prescriptions on how to overcome them. Overall, we found that, by using readout error mitigation, one obtains accurate state estimates even under significantly degraded experimental conditions, making the readout more robust.

While being limited by exponential scaling in both memory and required number of measurements, we expect the protocol to be feasible for up to 5-6 qubits if one replaces BME with MLE in the state reconstruction. This represents an interesting domain for error mitigation in quantum simulation as low order correlators of a larger system offer relevant information such as correlation propagation \cite{Richerme2014} and phase estimation \cite{Ebadi2021}. Recent developments in scalable approaches involving overlapping tomography \cite{Cotler2020, Tuziemski2023} could provide a framework for a scalable version of this protocol to large qubit numbers, which we intend to explore. 
In future work, it could be interesting to perform a similar experiment on multiqubit systems. Implementing the protocol on a different qubit architecture with different sources of noise would be a topic of further interest. Another option of interest is to investigate adaptive noise conscious strategies within this framework \cite{IvanovaRohling2023,Huszr2012}.  

\textit{Code \& Data availability.}
The code developed for this project is available on GitHub: \url{https://github.com/AdrianAasen/EMQST}. A short tutorial notebook is provided with examples of how to run the software. It can be interfaced with an experiment, or run as a simulation.

The experimental data are available upon request from A. Di Giovanni.

\textit{Author contributions.}
Development of the protocol and software was done by A. Aasen with supervision from M. Gärttner. The experimental realization and data generation was done by A. Di Giovanni with supervision from H. Rotzinger and A. Ustinov. A. Aasen and A. Di Giovanni prepared the draft for the manuscript. 
All authors contributed to the finalization of the manuscript. 

\acknowledgments

The authors are grateful for the quantum circuit provided
by D. Pappas, M. Sandberg, and M. Vissers. We thank W. Oliver and G. Calusine for providing the parametric amplifier.
This work was partially financed by the Baden-Württemberg Stiftung gGmbH. The authors acknowledge support by the state of Baden-Württemberg through bwHPC
and the German Research Foundation (DFG) through Grant No INST 40/575-1 FUGG (JUSTUS 2 cluster).

\iftrue
\appendix

\section{Explicit protocol realization}
\label{appendix:explicit_protocol}

 We present the explicit implementation of the protocol used in Sec.~\ref{sec:Results}. A pseudo-code outline of the whole measurement and readout error mitigation is presented in Algorithm \ref{Alg:REM_pseudocode}. For QDT, we use the maximum likelihood estimator described in Ref.~\cite{Fiurek2001}. We follow the prescription described in Sec.~\ref{subsec:quantum_detector_tomography}, and use all of the Pauli states as calibration states. The number of times each Pauli state is measured equals the maximal number of shots used for a single spin measurement in the state reconstruction, such that the dominant source of shot noise is not QDT. For QST, we use a BME, in particular we use the implementation described in Ref.~\cite{Struchalin2016}. The bank particles are generated from the Hilbert-Schmidt measure \cite{Zyczkowski2001}. This QST is also equipped with adaptive measurement strategies, which is a possible future extension to the current protocol. 

For both QDT and QST we use the Pauli-6 POVM, which is a static measurement strategy. It is known that the asymptotic scaling of such strategies depends on the proximity of the true state to one of the projective measurements \cite{Bagan2006,Struchalin2018}. To counteract this, an average over Haar-random states \cite{Mezzadri2006} is performed to get a robust performance estimate. In this way, we get the expected performance given that no prior information about the true state is available. We emphasize that for all random state reconstructions, the same QDT calibration is used.

\begin{algorithm}[H]
	\caption{Outline of readout error mitigation protocol} 
\textbf{Input}:\\ $\{\rho_B\}$: List of calibration states for QDT\\
$\{\rho_S\}$: (Haar-random) list of unknown states to measure.\\
$\{\{ M_i\}\}$: A set of POVMs forming an IC POVM, e.g basis measurement in $\sigma_x, \sigma_y, \sigma_z$\\
\textbf{Output}:\\ $\{\rho_{\text{REM}}\}$: List of readout error mitigated reconstructed states \\
$\{\{\tilde M_i\}\}$: A set of reconstructed noisy POVMs\\
$\{I\} $: List of reconstruction infidelites between readout error mitigated QST and standard QST\\\\
\textbf{Experiment:}
\begin{algorithmic}[1]
    \For {each measurement basis $\rho_B$}
        \For {each POVM $\{ M_i\}$}
            \State $\text{QDT}_\text{data} \leftarrow  \text{Measure}(\rho_B, \{ M_i\})$
        \EndFor
    \EndFor
    \For { each Haar-random state $\rho_S$}
        \For {each POVM $\{ M_i\}$}
            \State $\text{QST}_\text{data} \leftarrow  \text{Measure}(\rho_s, \{ M_i\}) $
        \EndFor
    \EndFor
\end{algorithmic}
\, \\
\textbf{Post-processing}:
\begin{algorithmic}[1]
    \For {each POVM in IC POVM}
        \State $\{\tilde M_i\} \leftarrow \text{QDT} (\text{QDT}_\text{data},\rho_B)$ \cite{Fiurek2001}
    \EndFor
    \For {each unknown state $\rho_S$}
        \State Construct modified likelihood function $\mathcal{L}_{\tilde M}(\rho)$, eq. \eqref{eq:Likelihood_function} 
        \State Construct standard likelihood function $\mathcal{L}_{M}(\rho)$
        \If {BME}
            \State Integrate $\mathcal{L}_{\tilde M}$ and find mean state $\rho_\text{BME}$ \cite{Struchalin2016} 
            \State Integrate $\mathcal{L}_{ M}$ and find mean state $\tilde \rho_\text{BME}$
        \EndIf
        \If {MLE}
            \State Find $\rho_\text{MLE}$ that maximizes $\mathcal{L}_{\tilde M}$ \cite{Lvovsky2004}
            \State Find $\tilde \rho_\text{MLE}$ that maximizes $\mathcal{L}_{ M}$
        \EndIf
        \State $\{\rho_\text{REM}\} \leftarrow \rho_\text{MLE/BME}$
        \State $\{I\} \leftarrow \text{Inf}(\rho_\text{MLE/BME}, \tilde \rho_\text{MLE/BME})$
    \EndFor

\end{algorithmic} 
\label{Alg:REM_pseudocode}
\end{algorithm}

\section{Convergence of modified estimator}
\label{appendix:Estimator_convergence}
Here we show that the protocol estimator in eq.~\eqref{eq:Adapted_likelihood_function} converges to the noiseless state with the modifications presented. By convergence of the estimator, we mean that the expected state converges to the true state for all quantum states, in the limit of infinite statistics. To make this explicit, we show that an estimator with the standard likelihood function in eq.\,\eqref{eq:Likelihood_function}  converges to any given true state $\rho_T$. The generalization to noisy measurements follows immediately.  

For simplicity, we will assume that the measurement device used implements an IC and minimal POVM. By extension, any quantum state is uniquely defined by the probabilities $p_i=\Tr(\rho M_i)=\langle M_i \rangle$. The natural estimator for the probabilities is the frequency of each outcome,
\begin{equation}
    \hat{p}_i=\frac{n_i}{N}.
\end{equation}
All we need to show, is that using this estimator for the probabilities maximizes the likelihood function in eq.~\eqref{eq:Likelihood_function}.
To be explicit, we need to show that  $\hat{p}_i=\tfrac{n_i}{N}$ is converges to the probability $p_i=\Tr(\rho M_i)$. This can be done by maximizing the log-likelihood function $\log(\mathcal{L}_M(\rho))$ with Lagrange multipliers. This yields that $p_i=\hat{p}_i$ maximizes the likelihood function. Therefore, if the experiment implements the POVM $\{M_i\}$, which means it samples from the probabilities $p_i=\Tr(\rho_TM _i)$, it is clear that we have a reconstruction $\hat \rho$ that converges to $\rho_T$. 

Consider the case where the measurement device is imperfect and implements the noisy POVM $\{\Tilde{M}_i\}$, and we have the reconstructed POVM from the QDT $\{\Tilde{M}_{i}^{\text{estm}}\}$. Analogous to the noiseless case, $\hat{\Tilde p}_i=\tfrac{\Tilde n_i}{N}$ converges to $\tilde p_i=\Tr(\rho_T \Tilde M_i)$, as long as $\langle \tilde{M}_{i}^{\text{estm}} \rangle$ converges to $ \Tilde{M}_i$. This is guaranteed by using the MLE reconstruction outlined in Ref.~\cite{Fiurek2001}.

\section{State preparation errors as effective state-dependent readout error channels}
\label{App:State_dependent_errors}

We present a straightforward model demonstrating how unitary gate errors can create an effective state-dependent error channel during state preparation, and how this influences the readout error mitigation protocol. By state-dependent, we mean that the effective error channel introduced during state preparation depends on specific properties of the prepared state. Our examination will be restricted to the case of a single qubit.

Consider a channel that describes the preparation of basis state $\rho_S$ from the ground state $\ket{0}$.
\begin{equation}
    \mathcal{E}_S(\ket{0}\bra{0}) = \rho_S.
\end{equation}
In the operator-sum representation a general channel can be represented as \cite{Nielsen2012} 
\begin{equation}
    \mathcal{E}(\rho) = \sum_i K_i \rho K_i^\dagger,
\end{equation}
where $\sum K_i^\dagger K_i = \mathbb{1}$.
When the state preparation is noiseless, the operator-sum representation is particularly simple
\begin{equation}
    \mathcal{E}_{S, \text{ideal}}(\rho) = U_S \rho U_S^\dagger,
\end{equation}
where $U_S$ unitary that implements the transformation $\ket{0}\bra{0} \rightarrow \rho_S$. We can decompose any single-qubit unitary \cite{Nielsen2012}
\begin{equation}
 U_S = R_Z(\theta_2) R_Y(\theta_1) R_Z(\theta_0), \label{eq:Unitary_decomposition} 
\end{equation} 
where $R_{\alpha}(\theta) = e^{-i \theta \sigma_{\alpha}/2}$ are the single qubit rotation operators about the Cartesian axis $\alpha$ of the Bloch sphere. We will now consider the channel $\mathcal{E}_S$ where with probability $p_i$ implements a unitary $\tilde U_{S,i}$  which deviates from the ideal $U_S$. We want to show that $\mathcal{E}_S = \mathcal{E}_{S,\text{error}} \circ 
 \mathcal{E}_{S, \text{ideal}}$. 
\begin{prop}
The state preparation channel can be separated into $\mathcal{E}_S = \mathcal{E}_{S,\text{error}} \circ 
 \mathcal{E}_{S, \text{ideal}}$, where $\mathcal{E}_{S,\text{error}}$ generally depends on the state $S$ being prepared.
\end{prop}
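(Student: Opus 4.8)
The plan is to exploit the invertibility of the ideal preparation unitary $U_S$ to peel off the ideal channel and collect whatever remains into an error channel. Writing the noisy preparation explicitly as a probabilistic mixture of the deviating unitaries,
\begin{equation}
\mathcal{E}_S(\rho) = \sum_i p_i\, \tilde U_{S,i}\, \rho\, \tilde U_{S,i}^\dagger,
\end{equation}
I would insert the identity $U_S^\dagger U_S = \mathbb{1}$ around $\rho$ and regroup, defining $V_{S,i} \equiv \tilde U_{S,i} U_S^\dagger$. This yields
\begin{equation}
\mathcal{E}_S(\rho) = \sum_i p_i\, V_{S,i}\, \big(U_S \rho U_S^\dagger\big)\, V_{S,i}^\dagger = \mathcal{E}_{S,\text{error}}\!\big(\mathcal{E}_{S,\text{ideal}}(\rho)\big),
\end{equation}
with $\mathcal{E}_{S,\text{error}}(\sigma) = \sum_i p_i V_{S,i}\sigma V_{S,i}^\dagger$. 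This first step reduces the statement to defining the error channel through the residual unitaries $V_{S,i}$.

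Second, I would confirm that $\mathcal{E}_{S,\text{error}}$ is a legitimate completely positive, trace-preserving map. Its Kraus operators are $K_i = \sqrt{p_i}\,V_{S,i}$; complete positivity is immediate as a convex mixture of unitary conjugations, and trace preservation follows from $\sum_i K_i^\dagger K_i = \sum_i p_i V_{S,i}^\dagger V_{S,i} = \big(\sum_i p_i\big)\mathbb{1} = \mathbb{1}$, using unitarity of each $V_{S,i}$ and normalization of the $p_i$. This certifies that the factorization is physically meaningful rather than merely formal.

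Third, and conceptually the important part, I would argue that $\mathcal{E}_{S,\text{error}}$ genuinely depends on $S$. The dependence enters through $V_{S,i} = \tilde U_{S,i} U_S^\dagger$, which couples the error realization to the rotation angles $(\theta_0,\theta_1,\theta_2)$ defining $U_S$ in eq.~\eqref{eq:Unitary_decomposition}. Rather than argue abstractly, I would exhibit a minimal model, e.g.\ a deterministic over-rotation $\tilde U_S = R_Z(\theta_2) R_Y(\theta_1+\delta) R_Z(\theta_0)$, for which the inner $R_Z(\theta_0)$ cancels and the two $R_Y$ rotations combine into $R_Y(\delta)$, while the outer $R_Z(\theta_2)$ does not cancel, leaving $V_S = R_Z(\theta_2) R_Y(\delta) R_Z(-\theta_2)$. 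This residual is a $\delta$-rotation about an equatorial axis fixed by $\theta_2$, which manifestly varies with the prepared state, showing that the dependence does not cancel in general.

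I would close by contrasting this with the benign scenario discussed in Sec.~\ref{sec:state-preparation-errors}: if the error acts after preparation independently of the targeted state (as for thermal or depolarizing noise), the $V_{S,i}$ lose their $S$-dependence and $\mathcal{E}_{S,\text{error}}$ collapses to a single fixed channel that the mitigation absorbs exactly. The main obstacle is therefore not the algebraic factorization, which is essentially forced by inserting $U_S^\dagger U_S$, but establishing this last point convincingly: one must display a concrete, physically motivated error whose angle dependence provably survives the cancellation, so that the qualifier \emph{generally} in the statement is justified and cleanly separated from the state-independent case.
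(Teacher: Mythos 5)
Your proposal is correct, and it reaches the same factorization as the paper but by a cleaner and slightly more general route. The paper works through the explicit ZYZ parametrization of eq.~\eqref{eq:Unitary_decomposition}, writes the noisy angles as $\tilde\theta_k=\theta_k+\Delta_k$, discards the innermost $R_Z(\tilde\theta_0)$ on the grounds that it acts trivially on $\ket{0}$, and then commutes and regroups same-axis rotations to build $U_{S,\text{error}}=R_Z(\Delta_2)R_Z(\theta_2)R_Y(\Delta_1)R_Z^\dagger(\theta_2)$ in eq.~\eqref{eq:noisy_unitary_decomposition}. Your definition $V_{S,i}=\tilde U_{S,i}U_S^\dagger$ obtains the same object (your over-rotation example reproduces the paper's error unitary with $\Delta_2=0$, $\Delta_1=\delta$) but as a genuine operator identity valid for arbitrary unitary deviations, with no need for the additive-angle parametrization or the ``acts trivially on $\ket{0}$'' caveat; the factorization of the random-unitary channel in eq.~\eqref{eq:noise_operator_sum_channel} then follows in one line, exactly as in the paper's final display. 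Your explicit CPTP check and, more importantly, your worked example showing that $V_S=R_Z(\theta_2)R_Y(\delta)R_Z(-\theta_2)$ retains a nontrivial $\theta_2$-dependence actually substantiate the qualifier ``generally depends on $S$'' more concretely than the paper, which only asserts this dependence after displaying $U_{S,\text{error}}$. Both arguments share the same standing assumption, namely that the noisy preparation is a probabilistic mixture of unitaries; neither addresses non-unitary Kraus decompositions, and your closing remark about state-independent residuals collapsing to a single fixed channel matches the paper's discussion of the finite-temperature case.
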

\begin{proof}
Our strategy is to show that each $\tilde U_{S,i} = \tilde U_{S,i,\text{error}} U_{S}$, where $\tilde U_{S,i,\text{error}}$ is some unitary, that may or may not depend on $S$. From such a decomposition of each $i$ it immediately follows that the channels also decompose. We start by expressing $\tilde U_{S,i}$ in the decomposition in eq.~\eqref{eq:Unitary_decomposition}  (where we drop the index $i$ for convenience) 
 \begin{equation}
     \tilde U_{S} = R_Z(\tilde \theta_2) R_Y(\tilde \theta_{1}) R_Z(\tilde \theta_0).
 \end{equation}
 This channel will always be applied to the ground state $\ket{0}$, so we can ignore  $R_Z(\tilde \theta_0)$ as it does not affect the ground state.  The remaining angles can be expressed in terms of the ideal ones as
 \begin{equation}
     \tilde \theta_k = \theta_k + \Delta_k.
 \end{equation}
 Since the rotation gate commutes with itself, we can split them
 \begin{equation}
     R_\alpha(\tilde \theta_k) = R_\alpha(\theta_k) R_\alpha (\Delta_k) = R_\alpha (\Delta_k) R_\alpha(\theta_k).
 \end{equation}
We can then write the the noisy unitary as 
\begin{equation}
\begin{aligned}
    \tilde U_{S} =&  R_Z(\tilde \theta_2) R_Y(\tilde \theta_{1})\\
    =& R_Z(\Delta_2) R_Z(\theta_2)  R_Y(\Delta_{1}) R_Y(\theta_{1})\\
    =& \underbrace{R_Z(\Delta_2) R_Z(\theta_2)  R_Y(\Delta_{1}) R_Z^\dagger(\theta_2) }_{U_{S,\text{error}}} \underbrace{R_Z(\theta_2)  R_Y(\theta_{1})}_{U_{S}}\\
    =& U_{S,\text{error}} U_{S}.
    \label{eq:noisy_unitary_decomposition}
\end{aligned}
\end{equation}
The error unitary generally depends on the rotation that was performed through $\theta_2$. 

Consider now the channel $\mathcal{E}_S$ where we with some probability $p_i$ implement the unitary $\tilde U_{S,i}$,
\begin{equation}
    \mathcal{E}_S(\rho) = \sum_i p_i \tilde U_{S,i} \rho \tilde  U_{S,i}^\dagger
    \label{eq:noise_operator_sum_channel}
\end{equation}
Inserting the noisy unitary from eq.~\eqref{eq:noisy_unitary_decomposition} into the channel we have
\begin{equation}
\begin{aligned}
    \mathcal{E}_S(\rho) =& \sum_i p_i U_{S,\text{error}} \overbrace{U_{S} \rho  U_{S}^\dagger}^{\mathcal{E}_{S, \text{ideal}}(\rho)} U_{S,\text{error}}^\dagger\\
    =& \sum_i p_i U_{S,\text{error}} \mathcal{E}_{S, \text{ideal}}(\rho) U_{S,\text{error}}^\dagger\\
    =&  \mathcal{E}_{S,\text{error}} \circ 
 \mathcal{E}_{S, \text{ideal}}(\rho)
    \end{aligned}
\end{equation}
and we have showed that the noise channels separate.
\end{proof}

This separation of the state preparation channel means that $\mathcal{E}_{S,\text{error}}$ will be interpreted as part of the readout errors in the passive picture.  Importantly, we see the explicit dependence of the error channel on the prepared state and the distribution $p_i$. 
With a state-dependent error channel present, one cannot reconstruct a consistent set of POVM elements based on the $I \cross S$ constraints, since each set of constrains per calibration state ($S$) comes from a different channel. 
This means, that the POVM reconstructed in the detector tomography stage, which is based on a finite set of calibration states, will in general not faithfully represent the error incurred for the calibration states themselves or any other prepared states.

With this simple and experimentally motivated model we can provide an explanation of why the protocol still works for some cases, while in others it does not. Note that the operator-sum representation is not unique, so it could still be possible to find representations where the explicit dependence vanishes, see e.g.\ increased temperature. 



\subsection{Qubit detuning}
Qubit detuning has non-trivial contributions from both $\Delta_1$ and $\Delta_2$. We therefore expect there to be state-dependent errors present in state preparation, and subsequently our readout error mitigation protocol to under-perform. 

\subsection{Increased temperature}
The primary source of error at elevated temperatures stems from the excitation of the qubit ground state. A secondary contribution arises from unitary gate errors, with finite temperatures interpreted as an error symmetrically distributed over fluctuating $\Delta_1$ and $\Delta_2$. The symmetric convex combination of these unitary rotations results in a depolarizing error channel that does not depend on the prepared state, thereby enabling us to transfer these state-preparation errors to the readout stage as well. 

\subsection{Shorter $T_1$ and $T_2$ times}
Increased manipulation times cannot easily be viewed as effective unitary gate errors. However, it is plausible that also non-unitary errors can lead to a state-dependent effective error channel. The protocol's under-performance during extended manipulation times might suggest that state-dependent errors are indeed induced.



\section{Digitizing experimental data}
\label{app:experimental calibrations}

\begin{figure}[t]
    \includegraphics[width=.9\linewidth]{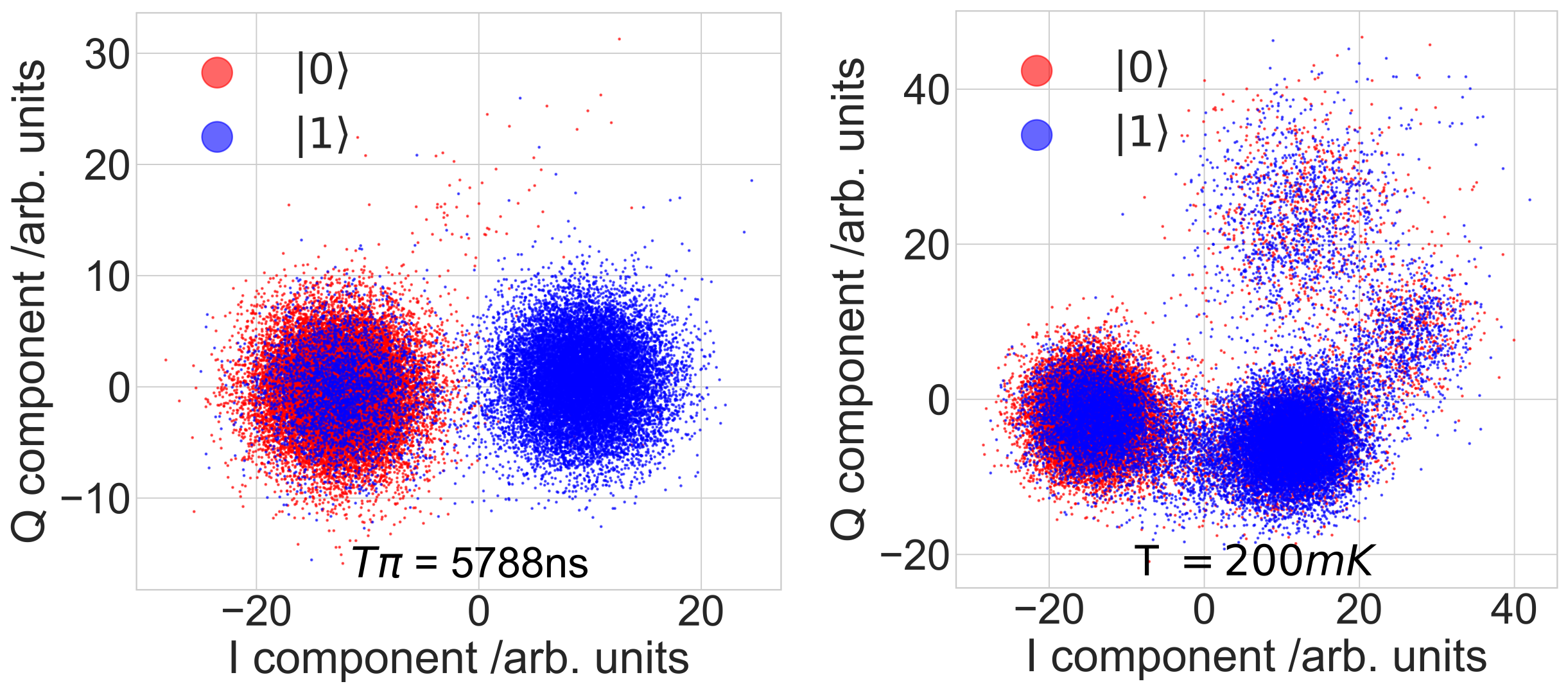}
    \caption{\justifying Single-shot qubit measurements in the IQ plane, color-coded to the two qubit states. Long manipulation pulses allow the qubit to decay more, populating the ground state more often (left).  Higher temperatures shift the readout to higher states, populating out-of-qubit states, too (right). }
    
    \label{fig:IQ-scatter}
\end{figure}

After measuring the length of a $\pi$-pulse, we use a supervised classification algorithm on the data to digitize the measured analog voltages (I and Q, see Fig.~\ref{fig:IQ-scatter}) into a single bit: 0 or 1. Often, two 2D Gaussian distributions can be fitted on the measured data points, but using machine-learning-based classification turned out to work more efficiently than Gaussian fitting, particularly for strong noises. We used the Python package sklearn, which runs in an intermediate level between the low-level experimental drivers and the high-level Bayesian and MLE algorithms.

\section{Insufficiency of classical error mitigation}
\label{appendix:Insufficiency_of_classical_error_mitigation}
To demonstrate that classical errors are not sufficient to capture all error present in the measurement device, we investigate coherent errors in the POVM elements. Classical errors assume that the reconstructed POVM elements are purely diagonal in their measurement basis, i.e.\ only contains a statistical redistribution of the diagonal elements. Coherent errors manifest in the off-diagonal elements and are beyond classical. For a superconducting qubit the Pauli-6 POVM is performed by measuring individually the three bases of the Pauli-operators. We can then investigate how off-diagonal each of these bases are with respect to their ideal eigenbasis. 

In Fig.~\ref{fig:coherence-errors}  we present the reconstructed POVMs for some of the noisy runs in their ideal measurement basis (i.e. we present the POVM elements $R_{X\rightarrow Z} M_X R_{X\rightarrow Z}^\dagger$, $R_{Y\rightarrow Z} M_Y R_{Y\rightarrow Z}^\dagger$ and $M_Z$). For a single qubit it is sufficient to consider only the element corresponding to the outcome 0.

\begin{figure*}[htp]
\centering
\includegraphics[width=.4\textwidth]{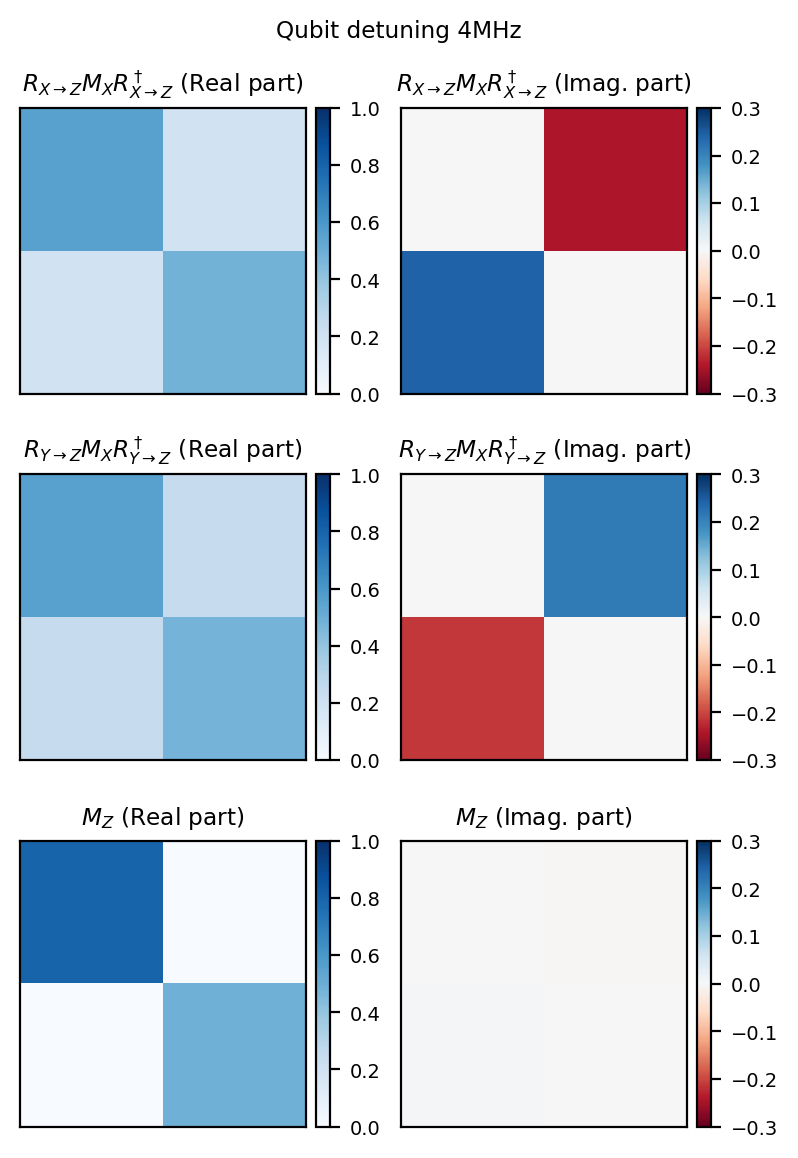}
\includegraphics[width=.4\textwidth]{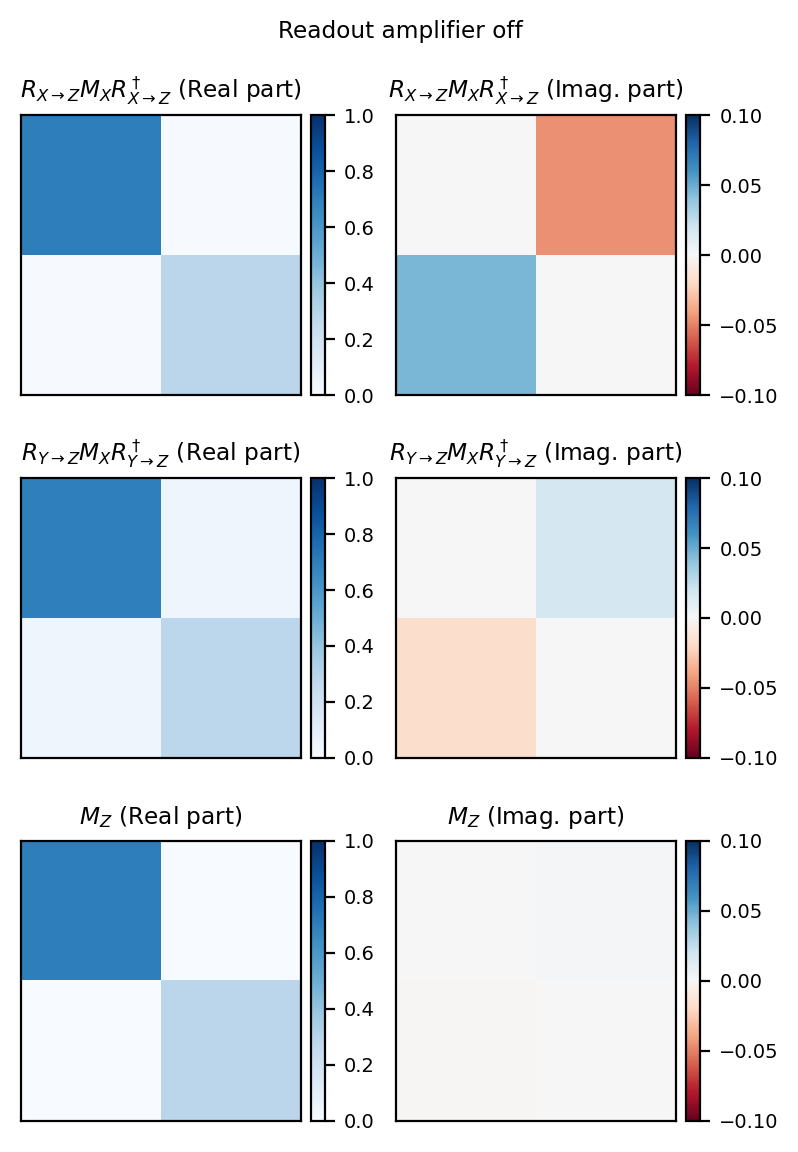}
\includegraphics[width=.4\textwidth]{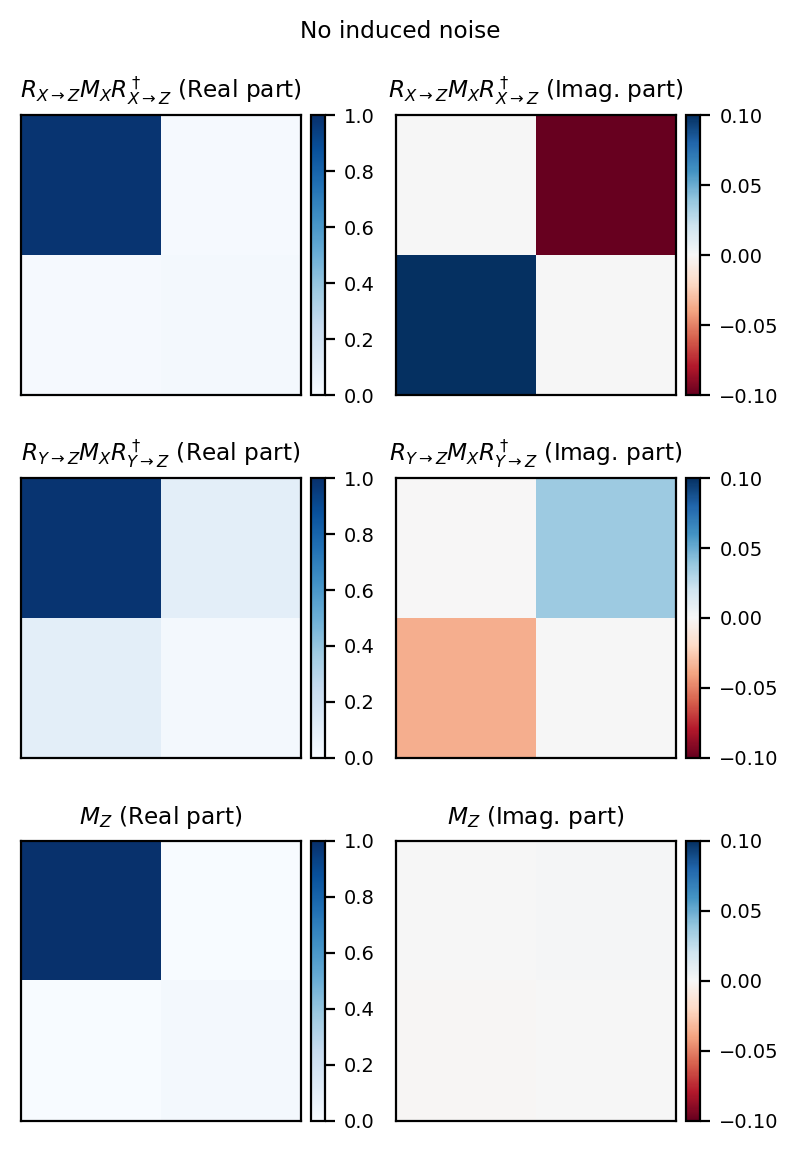}
\includegraphics[width=.4\textwidth]{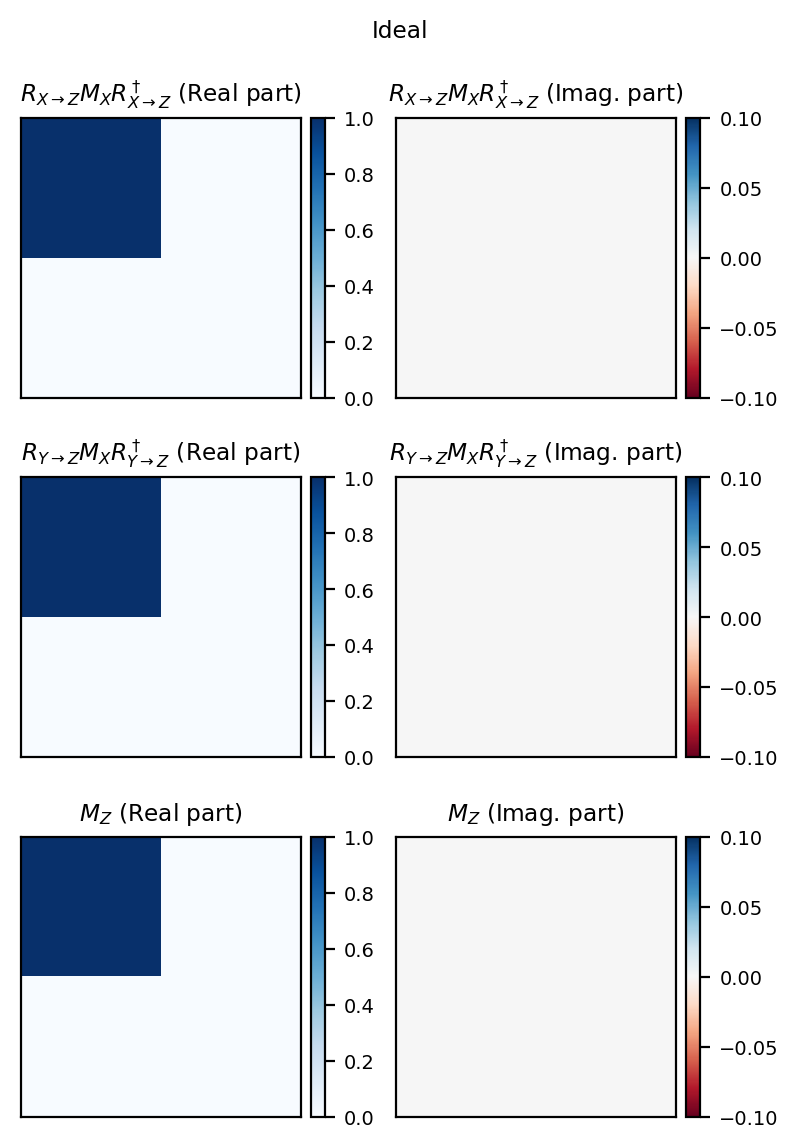}
\caption{The POVM elements in their ideal eigenbasis. Non-trivial off-diagonal elements signal coherent errors that are not captured by classical error mitigation. The plots shows, starting from the top left: Qubit detuning at 4 MHz, turning off the parametric amplifier, optimal operation of the experiment, and theoretical ideal POVM. The POVMs were estimated using $N=240$k single-shot measurements for each computational basis, which 
 means statistical fluctuations are on the order of $\propto 3\times 10^{-2}$, well below the coherence errors observed. }
\label{fig:coherence-errors}
\end{figure*}

We found that for almost all cases, the z-basis measurement (the computational basis measurement) does not have coherence errors beyond statistical fluctuations, while for many instances the x- and y-basis measurements suffer from large coherence errors in the induced noise scenarios. 


\bibliography{refs}{}
\bibliographystyle{ieeetr}
\end{document}